\documentclass[10pt]{article}
\usepackage{amsmath,amsfonts}
\usepackage{amssymb,amscd,amsthm}
\usepackage{diagbox}
\usepackage{makecell}
\usepackage{authblk}
\usepackage[all]{xy}
\usepackage{placeins}
\usepackage{geometry}
\usepackage{tabularx}
\usepackage{graphicx}
\usepackage{verbatim}
\usepackage[perpage,symbol*]{footmisc}
\usepackage[justification=centering]{caption}
\usepackage{longtable}
\usepackage{multirow}
\usepackage{indentfirst}  
\usepackage{enumitem}
\usepackage{hyperref}
\usepackage{bm}
\usepackage[figuresright]{rotating}
\usepackage{makecell}
\usepackage{subcaption}
\usepackage{rotating}
\theoremstyle{definition}
\newtheorem{definition}{Definition}

\newtheorem{lemma}{\textbf{Lemma}}
\newtheorem{theorem}{\textbf{Theorem}}

\newtheorem{corollary}{\textbf{Corollary}}
\usepackage{algorithm}
\usepackage{algpseudocode}
\newtheorem{example}{\textbf{Example}}

\begin{document}

	\baselineskip 17pt
\title{\Large\bfseries Bounds and Constructions of Maximum Toroidal Distance Codes}

\author{
	Pengjie Zhong\textsuperscript{1},
	Jinquan Luo\textsuperscript{2},
	and Yufeng Song\textsuperscript{3}
}

\date{}

	\maketitle
	
	\begingroup
	\renewcommand{\thefootnote}{}
	\footnotetext{
		
		Pengjie Zhong\textsuperscript{1} is with the School of Mathematics and Statistics, Central China Normal University, Wuhan, China.
		
		Jinquan Luo\textsuperscript{2} is with the School of Mathematics and Statistics \& Hubei Key Laboratory of Mathematical Sciences, Central China Normal University, Wuhan 430079, China.
		
		Yufeng Song\textsuperscript{3} is with the Department of Mathematics, Nanjing University of Aeronautics and Astronautics, Nanjing 210000, China.

		\textit{Email:} luojinquan@mail.ccnu.edu.cn(Jinquan Luo).
		
		\textit{Email:} pengjiezhong.math@qq.com(Pengjie Zhong).
		
		\textit{Email:} yufengsong.math@qq.com(Yufeng Song).
	}
	\endgroup
	\date{}
	\maketitle
	
	\qquad
	
	\begin{abstract}
	In lattice-based cryptographic schemes, both encoded messages and accumulated decryption noise are represented in a modulo $q$ space. Therefore, it is natural to study toroidal distances and maximum toroidal distance (MTD) codes.
	In this paper, we derive some upper bounds for minimum toroidal distance of a code, including a Plotkin-type bound, a local ball--Plotkin bound, and a Delsarte linear programming bound. We also exhibit examples showing that these bounds are sharp in some cases.
	Moreover, we present several code constructions with good minimum distance, some of which are MTD codes. For \(\ell=2\), we obtain a family of four-point MTD codes in \(\mathbb Z_q^2\). For $\ell=4$, we propose a general code construction and exhibit several explicit instances for specific values of $q$, some of which are proven to be MTD codes. For \(\ell=8\),  using the \(E_8\) lattice, we construct codes
	$C=2mE_8\cap \mathbb Z_q^8$, where $q=4m$
	and show that they are MTD codes. These results give explicit optimal constructions of MTD codes for \(\ell=2,4,8\). In the case $\ell=16$, we construct a code with minimum toroidal distance $3$ for $q=4$, while the known upper bound in this case is $2\sqrt{3}$.  Our main tools are geometric and linear programming methods.
	\end{abstract}
	
	Keywords{: toroidal distance, maximum toroidal distance code, lattice code, Plotkin bound, Delsarte linear programming bound. }

\section{Introduction}

In modular decoding problems for lattice-based schemes, the codebook is naturally viewed as a subset of the discrete torus $\mathbb Z_q^\ell$. Recent work on maximum toroidal distance (MTD) codes makes this explicit by formulating encoding as the selection of $2^\ell$ torus points and by maximizing the minimum $L_2$-norm toroidal distance~\cite{LiuSakzad2026}. Closely related lattice-quantizer approaches further reveal that Module Learning with Errors (M-LWE) based key reconciliation and Kyber encoding are closely tied to modulo $q$ codebook geometry and lattice packing~\cite{LiuSakzad2024,DesignsCodes2025}. In this sense, designing a good codebook is naturally interpreted as a finite packing problem on $\mathbb Z_q^\ell$.

On the cryptographic side, the Module-Lattice-Based Key-Encapsulation Mechanism (ML-KEM) standard is based on the M-LWE problem~\cite{FIPS203}, and the Kyber specification shows that noise growth and decryption failures play a central role in parameter selection~\cite{KyberSpec}. The lattice-quantizer framework makes this connection more concrete: it derives explicit decryption failure rate (DFR) bounds for generic M-LWE based key reconciliation mechanisms and shows that improved lattice codebooks can reduce the DFR without changing the underlying security parameters~\cite{LiuSakzad2024}. This motivates the use of the minimum toroidal distance as a natural geometric criterion for evaluating the resilience of modular codebooks against noise perturbations.

Extremal problems of this type are traditionally studied using packing arguments and linear programming methods. Delsarte's association-scheme framework provides a general linear programming (LP) approach for bounding code parameters~\cite{Delsarte73,DelsarteLevenshtein1998}, while subsequent work interprets LP bounds from packing and covering perspectives~\cite{NavonSamorodnitsky}. For the Lee metric, several refinements further strengthen LP bounds and improve the computational treatment of the linear constraints arising from Lee compositions~\cite{AstolaTabus,GabrysKiahMilenkovic2015,Sole88}. Since the toroidal distance can be regarded as a natural extension of the Lee distance, optimization problems for the toroidal distance can also be viewed as extremal geometric problems with both packing and LP formulations.

However, compared with the well-developed theories of codes in the Hamming and Lee metrics, the theory of toroidal distance codes remains relatively underdeveloped. Only a limited number of explicit MTD constructions are known, and exact MTD values are still unknown for many finite parameter regimes. This gap indicates that the theory of toroidal distance codes is far from complete and motivates our investigation of sharper upper bounds and new constructions.

The rest of the paper is organized as follows. In Section~\ref{pre}, we introduce some notations and preliminary materials used throughout the paper. In Section~\ref{bounds}, we study the connections among the Hamming, Lee, and toroidal distances. We also derive several upper bounds on the minimum toroidal distance, including a Plotkin-type bound, a local ball--Plotkin bound, and a Delsarte LP bound. In addition, we provide examples showing that these bounds are tight in certain cases. In Section~\ref{MTDcode}, we present several constructions of MTD codes for $\ell=2,4,8$ and prove that these codes are MTD. For $\ell=16$ and $q=4$, we construct a code with minimum toroidal distance $3$, while the known upper bound in this case is $2\sqrt{3}$.
\section{Preliminaries}	
\label{pre}
	
	Let $\bm{x}=(x_1,\dots,x_n)\in \mathbb R^n$. The $\ell_1$-norm of $\bm{x}$ is 
	\[
	\|\bm{x}\|_1=\sum_{i=1}^n |x_i|,
	\]
	and the $\ell_2$-norm of $\bm{x}$ is 
	\[
	\|\bm{x}\|_2=\left(\sum_{i=1}^n x_i^2\right)^{1/2}.
	\]

Throughout this paper, the residue class ring \(\mathbb Z_q\) is represented by \(\{0,1,\ldots,q-1\}\).
All additions and
subtractions in $\mathbb Z_q^n$ are taken modulo $q$. 
\begin{definition}[Three distances]
	For $a,b\in\mathbb Z_q$, define the Lee distance between $a$ and $b$ by
\[
d_L(a,b):=\min\{|a-b|,\ q-|a-b|\}.
\]

For $\bm{x}=(x_1,x_2,\dots,x_n)$, $\bm{y}=(y_1,y_2,\dots,y_n) \in \mathbb Z_q^n$, the \emph{Hamming}, \emph{Lee} and \emph{toroidal distances} between $\bm{x}$ and $\bm{y}$ are defined by

\begin{enumerate}[label=(\arabic*)]
	\item $d_H(\bm{x},\bm{y})
	:=
	\bigl|\{\,i\in\{1,2,\dots,n\}:x_i\neq y_i\,\}\bigr|;$
	\item $d_L(\bm{x},\bm{y})
	:=
	\sum_{i=1}^n d_L(x_i,y_i);$
	\item $d_T(\bm{x},\bm{y})
	:=
	\left(\sum_{i=1}^n d_L(x_i,y_i)^2\right)^{\frac{1}{2}}.$
	
\end{enumerate}		

For a code $C\subseteq \mathbb Z_q^n$, \emph{the minimum Hamming, Lee, and toroidal distances} of $C$ are defined by
\begin{enumerate}[label=(\arabic*)]
	\item $	d_H(C):=\min_{\substack{\bm{x},\bm{y}\in C\\ \bm{x}\neq \bm{y}}}
	d_H(\bm{x},\bm{y});$
	\item $	d_L(C):=\min_{\substack{\bm{x},\bm{y}\in C\\ \bm{x}\neq \bm{y}}}
	d_L(\bm{x},\bm{y});$
	\item $	d_T(C):=\min_{\substack{\bm{x},\bm{y}\in C\\ \bm{x}\neq \bm{y}}}
	d_T(\bm{x},\bm{y}).$
\end{enumerate}

	In particular, for a codeword $\bm u$, we write
	\[
	|\bm u|_T := d_T(\bm u,\bm 0).
	\]
	\end{definition}

	Let \(C\subseteq \mathbb Z_q^n\) be a code with cardinality \(|C|=M\) and
	minimum toroidal distance \(d_T\). Then \(C\) is called an
	\((n,M,d_T)_q\) \emph{toroidal distance code}.
\section{Bounds on Minimum Toroidal Distance}
	\label{bounds}
	
	\subsection{Comparison of Three Distances}

	\begin{theorem}
		\label{connections}
		For $C \subseteq \mathbb{Z}_q^n$ , then
	
\begin{enumerate}[label=(\roman*)]
	\item $d_T(C) \leq \lfloor \frac{q}{2} \rfloor \sqrt{ d_H(C)},$ 
	the equality holds iff there exist distinct codewords
	\(\bm{x},\bm{y}\in C\) such that 	
\[
\left\{
\renewcommand{\arraystretch}{1.35}
\begin{array}{@{\quad}l}
	d_H(\bm{x},\bm{y})=d_H(C),\\
	d_T(\bm{x},\bm{y})=d_T(C),\\
	d_L(x_i,y_i)\in\{0,\lfloor \frac{q}{2} \rfloor\}  \text{ for all } 1 \leq i \leq n.
\end{array}
\right.
\]
	\item\label{item:second-condition} $\frac{d_L(C)}{\sqrt{n}} \leq d_T(C) \leq d_L(C),$
	the left equality holds iff there exist distinct codewords
	\(\bm{x},\bm{y}\in C\) such that
	\[
	\left\{
	\renewcommand{\arraystretch}{1.35}
	\begin{array}{@{\quad}l}
		d_L(\bm{x},\bm{y})=d_L(C),\\
		d_T(\bm{x},\bm{y})=d_T(C),\\
		d_L(x_i,y_i)=d_L(x_j,y_j) \text{ for all } 1 \leq i,j \leq n.
	\end{array}
	\right.
	\]
	The right equality holds iff there exist distinct codewords
	\(\bm{x},\bm{y}\in C\) such that
	\[
	\left\{
	\renewcommand{\arraystretch}{1.35}
	\begin{array}{@{\quad}l}
		d_L(\bm{x},\bm{y})=d_L(C),\\
		d_T(\bm{x},\bm{y})=d_T(C),\\
		d_L(x_i,y_i)=0 \text{ except for only one } i .
	\end{array}
	\right.
	\]
	
	\item\label{item:third-condition} 
	$d_T(C)\le \sqrt{\left\lfloor \frac{q}{2}\right\rfloor d_L(C)}$,
	the equality holds iff there exist distinct codewords
	\(\bm{x},\bm{y}\in C\) such that
	\[
	\left\{
	\renewcommand{\arraystretch}{1.35}
	\begin{array}{@{\quad}l}
		d_L(\bm{x},\bm{y})=d_L(C),\\
		d_T(\bm{x},\bm{y})=d_T(C),\\
		d_L(x_i,y_i)\in\{0,\lfloor \frac{q}{2} \rfloor\} \text{ for all } 1 \leq i \leq n.
	\end{array}
	\right.
	\]

\end{enumerate}

	\end{theorem}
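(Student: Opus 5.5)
The plan is to reduce every inequality to a pointwise estimate for a single pair of codewords and then pass to minima. Fix distinct $\bm{x},\bm{y}\in C$ and write $a_i=d_L(x_i,y_i)\in\{0,1,\dots,\lfloor q/2\rfloor\}$. Then $d_H(\bm{x},\bm{y})=|\{i:a_i\neq 0\}|$, $d_L(\bm{x},\bm{y})=\sum a_i$ and $d_T(\bm{x},\bm{y})^2=\sum a_i^2$. The elementary inequalities on the nonnegative vector $(a_i)$ are:
\begin{itemize}
\item $\sum a_i^2\le \lfloor q/2\rfloor^2\,|\mathrm{supp}(a)|$;
\item $(\sum a_i)^2/n\le \sum a_i^2\le (\sum a_i)^2$, by Cauchy--Schwarz and by expanding the square with nonnegative cross terms;
\item $\sum a_i^2\le \lfloor q/2\rfloor \sum a_i$, since $a_i^2\le \lfloor q/2\rfloor a_i$ termwise.
\end{itemize}

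To turn these into statements about $C$, for each upper bound I choose the pair attaining the minimum of the quantity on the right-hand side. For (i), take $\bm{x},\bm{y}$ with $d_H(\bm{x},\bm{y})=d_H(C)$; then $d_T(C)\le d_T(\bm{x},\bm{y})\le \lfloor q/2\rfloor\sqrt{d_H(C)}$. The upper bounds in (ii) and (iii) follow the same way, using a pair attaining $d_L(C)$. For the lower bound in (ii), take a pair attaining $d_T(C)$; then $d_L(C)\le d_L(\bm{x},\bm{y})\le \sqrt{n}\,d_T(\bm{x},\bm{y})=\sqrt n\,d_T(C)$.

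For the equality characterizations, I track when each link in the chain is tight.

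\emph{Sufficiency.} If a pair satisfies the three listed conditions, substitute directly. For example, in (i) the conditions give $d_T(C)=d_T(\bm{x},\bm{y})=\lfloor q/2\rfloor\sqrt{d_H(\bm{x},\bm{y})}=\lfloor q/2\rfloor\sqrt{d_H(C)}$, and the other cases are analogous.

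\emph{Necessity for the upper bounds.} Suppose equality holds and take the pair $\bm{x},\bm{y}$ attaining $d_H(C)$ (respectively $d_L(C)$). The chain $d_T(C)\le d_T(\bm{x},\bm{y})\le \text{RHS}$ collapses to equalities. This forces $d_T(\bm{x},\bm{y})=d_T(C)$, and it forces the pointwise inequality to be tight. Tightness means:
\begin{itemize}
\item each nonzero $a_i$ equals $\lfloor q/2\rfloor$ in (i) and (iii), since $a_i^2=\lfloor q/2\rfloor a_i$ forces $a_i\in\{0,\lfloor q/2\rfloor\}$;
\item all cross terms $a_ia_j$ vanish for the right inequality of (ii), so at most one $a_i$ is nonzero, and exactly one since $\bm{x}\ne\bm{y}$.
\end{itemize}

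\emph{Necessity for the left inequality of (ii).} Here the chain runs through the pair attaining $d_T(C)$. Equality forces $d_L(\bm{x},\bm{y})=d_L(C)$ together with equality in Cauchy--Schwarz, which means all $a_i$ are equal.

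The main obstacle is bookkeeping rather than mathematics: in each case one must pick the pair attaining the right minimum so that both "$=d_\ast(C)$" conditions come out of the collapsed chain. The one-nonzero-coordinate case needs a small care point, namely that "except for only one $i$" should be read as exactly one nonzero coordinate.
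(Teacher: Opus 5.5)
Your proposal is correct and follows the paper's proof step for step. Both use the same pointwise inequalities on $a_i=d_L(x_i,y_i)$, choose a pair attaining $d_H(C)$, $d_L(C)$, or (for the left bound in (ii)) $d_T(C)$, and collapse the resulting chain to read off the equality conditions. Your explicit sufficiency check, and your remark that $\bm x\neq\bm y$ forces exactly one nonzero coordinate in the right equality of (ii), are minor clarifications the paper leaves implicit. One small imprecision: in (i) the tightness condition is $a_i=\lfloor q/2\rfloor$ on the support, not $a_i^2=\lfloor q/2\rfloor a_i$, though the conclusion is the same.
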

	\begin{proof}
		
\begin{enumerate}[label=(\roman*)]
\item For any distinct \(\bm{x},\bm{y}\in C\), put
\[
a_i=d_L(x_i,y_i),\qquad 1\le i\le n.
\]
Then \(0\le a_i\le \lfloor q/2\rfloor\), and \(a_i=0\) exactly when
\(x_i=y_i\). Hence
\begin{equation}
	\label{distance11}
d_T(\bm{x},\bm{y})^2
=
\sum_{i=1}^n a_i^2
\le
\left\lfloor \frac{q}{2}\right\rfloor^2
\#\{i:a_i\neq 0\}.
\end{equation}
Since \(\#\{i:a_i\neq 0\}=d_H(\bm{x},\bm{y})\), we get
\[
d_T(\bm{x},\bm{y})
\le
\left\lfloor \frac{q}{2}\right\rfloor
\sqrt{d_H(\bm{x},\bm{y})}.
\]

Now choose distinct codewords \(\bm{x},\bm{y}\in C\) such that
\[
d_H(\bm{x},\bm{y})=d_H(C).
\]
Then
\begin{equation}
	\label{distance12}
	d_T(C)
	\le
	d_T(\bm{x},\bm{y})
	\le
	\left\lfloor \frac{q}{2}\right\rfloor
	\sqrt{d_H(C)}.
\end{equation}

By \eqref{distance11} and \eqref{distance12}, we obtain that equality holds iff there exist distinct codewords
\(\bm{x},\bm{y}\in C\) such that
\[
\left\{
\renewcommand{\arraystretch}{1.35}
\begin{array}{@{\quad}l}
	d_H(\bm{x},\bm{y})=d_H(C),\\
	d_T(\bm{x},\bm{y})=d_T(C),\\
	d_L(x_i,y_i)\in\left\{0,\left\lfloor q/2\right\rfloor\right\}
	 \text{ for all } 1 \leq i \leq n.
\end{array}
\right.
\]

	\item For any \(\bm{x},\bm{y}\in \mathbb Z_q^n\), put
	\[
	a_i=d_L(x_i,y_i),\qquad 1\le i\le n.
	\]

	Since
	\begin{equation}
		\label{threedistance}
		\begin{aligned}
			d_L(\bm{x},\bm{y})^2
			=
			\left(\sum_{i=1}^n a_i\right)^2  
			&=
			\sum_{i=1}^n a_i^2
			+
			2\sum_{1\le i<j\le n}a_i a_j  \\
			&\ge
			\sum_{i=1}^n a_i^2
			=
			d_T(\bm{x},\bm{y})^2,
		\end{aligned}
	\end{equation}
	we have
	\[
	d_T(\bm{x},\bm{y})\le d_L(\bm{x},\bm{y}).
	\]
	By \eqref{threedistance}, the equality holds iff
	\[
	a_i a_j=0\qquad \text{for all }i<j,
	\]
which is equivalent to saying that at most one of the \(a_i\) is nonzero.
	
	On the other hand, 
	\[
	\frac{1}{n}\sum_{i=1}^n a_i
	\le
	\left(\frac{1}{n}\sum_{i=1}^n a_i^2\right)^{1/2}.
	\]
	Hence
$$
		d_L(\bm{x},\bm{y})
		=
		\sum_{i=1}^n a_i
		\le
		\sqrt n
		\left(\sum_{i=1}^n a_i^2\right)^{1/2}
		=
		\sqrt n\,d_T(\bm{x},\bm{y}).
	$$
	The equality holds iff
	\[
	a_1=a_2=\cdots=a_n.
	\]
	Therefore,
	\[
	\frac{d_L(\bm{x},\bm{y})}{\sqrt n}
	\le
	d_T(\bm{x},\bm{y})
	\le
	d_L(\bm{x},\bm{y}).
	\]
	Taking minimum over all distinct codewords \(\bm{x},\bm{y}\in C\) gives
	\[
	\frac{d_L(C)}{\sqrt n}
	\le
	d_T(C)
	\le
	d_L(C).
	\]
	
	The left equality holds iff there exist distinct codewords
	\(\bm{x},\bm{y}\in C\) such that
	\[
	\left\{
	\renewcommand{\arraystretch}{1.35}
	\begin{array}{@{\quad}l}
		d_L(\bm{x},\bm{y})=d_L(C),\\
		d_T(\bm{x},\bm{y})=d_T(C),\\
		d_L(x_i,y_i)=d_L(x_j,y_j) \text{ for all } 1 \leq i,j \leq n.
	\end{array}
	\right.
	\]
	Indeed, if the left equality holds, choose
	\(\bm{x},\bm{y}\in C\) with
	\(d_T(\bm{x},\bm{y})=d_T(C)\). Then
	\[
	d_L(C)
	\le
	d_L(\bm{x},\bm{y})
	\le
	\sqrt n\,d_T(\bm{x},\bm{y})
	=
	\sqrt n\,d_T(C)
	=
	d_L(C).
	\]
	Thus both inequalities above are equalities, which implies
	\(d_L(\bm{x},\bm{y})=d_L(C)\) and
	\(d_L(x_i,y_i)=d_L(x_j,y_j)\) for all \(i,j\).
	
	The right equality of \ref{item:second-condition} holds iff there exist distinct codewords
	\(\bm{x},\bm{y}\in C\) such that
	\[
	\left\{
	\renewcommand{\arraystretch}{1.35}
	\begin{array}{@{\quad}l}
		d_L(\bm{x},\bm{y})=d_L(C),\\
		d_T(\bm{x},\bm{y})=d_T(C),\\
		d_L(x_i,y_i)=0 \text{ except for only one } i.
	\end{array}
	\right.
	\]
	Indeed, if the right equality of \ref{item:second-condition} holds, choose
	\(\bm{x},\bm{y}\in C\) with
	\(d_L(\bm{x},\bm{y})=d_L(C)\). Then
	\[
	d_T(C)
	\le
	d_T(\bm{x},\bm{y})
	\le
	d_L(\bm{x},\bm{y})
	=
	d_L(C)
	=
	d_T(C).
	\]
	Thus the above are all equalities, which implies
	\(d_T(\bm{x},\bm{y})=d_T(C)\) and at most one of 
	\(d_L(x_i,y_i)\) is nonzero.

		\item For any \(\bm{x},\bm{y}\in\mathbb Z_q^n\),
		\[
		d_T(\bm{x},\bm{y})^2
		=
		\sum_{i=1}^n d_L(x_i,y_i)^2
		\le
		\left\lfloor \frac{q}{2}\right\rfloor
		\sum_{i=1}^n d_L(x_i,y_i)
		=
		\left\lfloor \frac{q}{2}\right\rfloor d_L(\bm{x},\bm{y}).
		\]
		Now choose distinct codewords \(\bm{x},\bm{y}\in C\) such that
		\(d_L(C)=d_L(\bm{x},\bm{y})\). Then
		\[
		d_T(C)^2
		\le d_T(\bm{x},\bm{y})^2
		\le
		\left\lfloor \frac{q}{2}\right\rfloor d_L(\bm{x},\bm{y})
		=
		\left\lfloor \frac{q}{2}\right\rfloor d_L(C).
		\]
		Therefore,
		\[
		d_T(C)
		\le
		\sqrt{\left\lfloor \frac{q}{2}\right\rfloor d_L(C)}.
		\]
		
		Equality holds iff there exist distinct codewords
		\(\bm{x},\bm{y}\in C\) such that
		\[
		\left\{
		\renewcommand{\arraystretch}{1.35}
		\begin{array}{@{\quad}l}
			d_L(\bm{x},\bm{y})=d_L(C),\\
			d_T(\bm{x},\bm{y})=d_T(C),\\
			d_L(x_i,y_i)\in\left\{0,\left\lfloor q/2\right\rfloor\right\}
			 \text{ for all } 1 \leq i \leq n.
		\end{array}
		\right.
		\]
		
		Indeed, if equality holds, choose distinct codewords
		\(\bm{x},\bm{y}\in C\) such that
		\(d_L(\bm{x},\bm{y})=d_L(C)\). Then
		\[
		d_T(C)^2
		\le
		d_T(\bm{x},\bm{y})^2
		\le
		\left\lfloor \frac{q}{2}\right\rfloor d_L(\bm{x},\bm{y})
		=
		\left\lfloor \frac{q}{2}\right\rfloor d_L(C)
		=
		d_T(C)^2.
		\]
		Thus the above are all equalities. Hence
		\(d_T(\bm{x},\bm{y})=d_T(C)\). Moreover, the second inequality is an
		equality iff
		\[
		d_L(x_i,y_i)^2
		=
		\left\lfloor \frac{q}{2}\right\rfloor d_L(x_i,y_i)
		\]
		 for every \(i\), which implies
		\[
		d_L(x_i,y_i)\in
		\left\{0,\left\lfloor q/2\right\rfloor\right\}
		\quad \text{for all } i.
		\]
		The converse also holds by the same reasoning.
	\end{enumerate}	
	\end{proof}

\subsection{Global Plotkin-Type Bound}

	The following result was given in \cite{WynerGraham1968}.
	
	\begin{lemma}
		\label{leelema}
		Let $C\subseteq\mathbb Z_q^n$ be a code with
	$|C|=M$. Then
		\[
		d_L(C)\le 	\begin{cases}
			\dfrac{qMn}{4(M-1)},  &q \text{ even},\\[6mm]
			\dfrac{(q^2-1)Mn}{4q(M-1)}, &q \text{ odd}.
		\end{cases}
		\]

	\end{lemma}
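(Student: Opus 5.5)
The plan is the classical Plotkin averaging argument, carried out coordinate by coordinate for the Lee distance. Set
\[
S:=\sum_{\substack{\bm{x},\bm{y}\in C\\ \bm{x}\neq \bm{y}}} d_L(\bm{x},\bm{y}),
\]
the sum over ordered pairs of distinct codewords. Every term is at least $d_L(C)$, and there are $M(M-1)$ such pairs, so
\[
S\ge M(M-1)\,d_L(C).
\]
For the upper bound, write $S=\sum_{i=1}^n S_i$, where $S_i=\sum_{\bm{x},\bm{y}\in C} d_L(x_i,y_i)$. Including the diagonal pairs is harmless, since they contribute $0$.

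Everything then reduces to a one-coordinate estimate. Fix $i$ and let $m_a$ be the number of codewords whose $i$-th coordinate equals $a\in\mathbb Z_q$, so that $\sum_a m_a=M$. Then
\[
S_i=\sum_{a,b\in\mathbb Z_q} m_a m_b\, d_L(a,b).
\]
The claim to prove is
\[
S_i\le \frac{q}{4}M^2 \ (q \text{ even}),\qquad S_i\le \frac{q^2-1}{4q}M^2 \ (q \text{ odd}).
\]
Summing over $i$ and comparing with the lower bound gives
\[
M(M-1)\,d_L(C)\le n\cdot(\text{bound})\cdot M^2,
\]
and dividing by $M(M-1)$ yields the lemma.

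The coordinate estimate is the main obstacle. Normalising $p_a=m_a/M$, one must maximise the quadratic form $\sum_{a,b} p_a p_b\, d_L(a,b)$ over the probability simplex on $\mathbb Z_q$. The matrix $D=(d_L(a-b))$ is circulant, so its eigenvectors are the characters $\chi_k(a)=e^{2\pi i ka/q}$, with eigenvalues
\[
\lambda_k=\sum_{t\in\mathbb Z_q} d_L(t)\,e^{2\pi i kt/q}.
\]
For $k=0$ this gives $\lambda_0=\sum_t d_L(t)$, which equals $q^2/4$ for $q$ even and $(q^2-1)/4$ for $q$ odd. For $k\ne0$, I will show $\lambda_k\le 0$. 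The cleanest way is to note that $d_L(t)$ is a "tent" function on the cycle: it is $q/4$ minus a multiple of the autocorrelation of the indicator of an interval of length about $q/2$, which has nonnegative Fourier transform. Alternatively, compute the sum directly; it is a standard Fejér-type sum that comes out as $-\sin^{-2}(\pi k/q)$ times a nonnegative factor (and $0$ for even $k$ when $q$ is even).

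Granting that the nonzero eigenvalues are nonpositive, expand $p=\frac1q\mathbf 1+r$ with $\sum_a r_a=0$. The vector $r$ lies in the span of the nonconstant characters, so
\[
p^{T}Dp=\frac{\lambda_0}{q}+r^{T}Dr\le \frac{\lambda_0}{q}.
\]
This is exactly $q/4$ or $(q^2-1)/(4q)$ respectively, which completes the plan.

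Should the Fourier sign computation get messy, the fallback is the known combinatorial argument of Wyner–Graham. It pairs the values $a$ and $a+\lfloor q/2\rfloor$ and uses
\[
d_L(a,b)=\frac{1}{2}\sum_{t}\bigl|\mathbf 1_{I_t}(a)-\mathbf 1_{I_t}(b)\bigr|,
\]
a representation as a sum over half-circle arcs $I_t$, up to normalisation. Each term $\sum_{a,b} m_a m_b\,|\mathbf 1_{I}(a)-\mathbf 1_{I}(b)|=2|C\cap I|\,|C\setminus I|$ is at most $M^2/2$. Summing over the $q$ arcs gives the even-$q$ bound. For odd $q$, arcs of lengths $(q\pm1)/2$ are used, and the extra factor $(q^2-1)/q^2$ comes from the arc-length normalisation.
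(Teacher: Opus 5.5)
Your proof is correct, and it differs from the paper, which gives no proof of this lemma: it imports the result from Wyner and Graham and uses it only as a black box in Theorem~\ref{Bq}.

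The double counting and the reduction to the one-coordinate inequality $\sum_{a,b}m_am_b\,d_L(a,b)\le M^2\lambda_0/q$ are the standard Plotkin steps. Your real ingredient is spectral: the circulant matrix $(d_L(a-b))$ has eigenvalue $\lambda_0=\lfloor q^2/4\rfloor$ on $\mathbf 1$ and $\lambda_k\le0$ for $k\ne0$.

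Your autocorrelation certificate works as stated if the ``indicator'' is the $\pm1$ function $h=\mathbf 1_I-\mathbf 1_{\mathbb Z_q\setminus I}$ of an arc $I$ of length $\lfloor q/2\rfloor$. Then $|I\triangle(I+t)|=2d_L(t)$ gives $d_L(t)=\frac q4-\frac14\sum_x h(x)h(x-t)$ for every $q$. With $\hat h(k)=\sum_x h(x)\zeta^{kx}$, this yields
\[
\lambda_k=-\tfrac14|\hat h(k)|^2=-\frac{\sin^2(\pi k\lfloor q/2\rfloor/q)}{\sin^2(\pi k/q)}\le0 \qquad (k\neq0).
\]
With a $0/1$ indicator the additive constant is $\lfloor q/2\rfloor$ rather than $q/4$, which only changes $\lambda_0$.

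Beyond self-containment, your route gives the equality structure, which a bare citation does not:
\begin{itemize}
\item For odd $q$, every $\lambda_k$ with $k\neq0$ is strictly negative. Equality therefore forces each coordinate of $C$ to be uniformly distributed, and in particular $q\mid M$.
\item For even $q$, $\lambda_k=0$ exactly for even $k$. Equality in the coordinate estimate then holds iff your counts satisfy $m_a=m_{a+q/2}$ for all $a$. This is exactly the antipodal structure of the extremal codes $\{\bm a_0+\frac q2\bm b\}$ in Theorem~\ref{Bq}.
\end{itemize}
Your argument is also, in effect, a degree-one dual certificate for the Delsarte LP of Theorem~\ref{thm:delsarte-lp-toroidal}, using characters supported on a single coordinate. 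So it fits the paper's own framework.

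One caution about your fallback for odd $q$. Bounding each arc term by $M^2/2$ only gives $S_i\le qM^2/4$. Let $x_t$ be the number of codewords whose $i$-th coordinate lies in $I_t$, so $S_i=\sum_t x_t(M-x_t)$. To reach $\frac{q^2-1}{4q}M^2$ you need $\sum_t x_t=\frac{q-1}{2}M$ together with concavity of $x\mapsto x(M-x)$. That, not an ``arc-length normalisation'', is the source of the factor $(q^2-1)/q^2$.
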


\begin{theorem}[Global Plotkin-Type Bound]
	\label{Bq}
	Let $C\subseteq\mathbb Z_q^n$ be a toroidal-distance code with
	$|C|=M\ge 2$. Then
	\[
	d_T(C)
	\le
	\begin{cases}
		\dfrac{q}{2\sqrt2}\sqrt{\dfrac{Mn}{M-1}}, & q \text{ even},\\[6mm]
		\sqrt{\dfrac{(q-1)(q^2-1)}{8q}\cdot\dfrac{Mn}{M-1}}, & q \text{ odd}.
	\end{cases}
	\]
	In the  case $q$ even, equality holds for
	\[
	C=\left\{\bm a_0+\frac q2\bm b \bmod q:\bm b\in B\right\},
	\]
	where $\bm a_0\in\mathbb Z_q^n$ is fixed and
	$B\subseteq\mathbb Z_2^n$ satisfies $|B|=M$ and
	$
	d_H(B)=\frac{Mn}{2(M-1)}
	$.
\end{theorem}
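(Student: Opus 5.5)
The plan is to combine Theorem~\ref{connections}\ref{item:third-condition} with Lemma~\ref{leelema}. By Theorem~\ref{connections}\ref{item:third-condition},
\[
d_T(C)^2\le \left\lfloor \tfrac q2\right\rfloor d_L(C).
\]
For $q$ even, $\lfloor q/2\rfloor=q/2$, and Lemma~\ref{leelema} gives $d_L(C)\le \frac{qMn}{4(M-1)}$. Hence
\[
d_T(C)^2\le \frac{q^2}{8}\cdot\frac{Mn}{M-1},
\]
which is the stated bound. For $q$ odd, $\lfloor q/2\rfloor=(q-1)/2$, and the odd case of the lemma gives
\[
d_T(C)^2\le \frac{q-1}{2}\cdot \frac{(q^2-1)Mn}{4q(M-1)}=\frac{(q-1)(q^2-1)}{8q}\cdot\frac{Mn}{M-1}.
\]
Taking square roots finishes the inequality part.

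For the equality claim with $q$ even, let $C=\{\bm a_0+\frac q2\bm b:\bm b\in B\}$. The map $\bm b\mapsto \bm a_0+\frac q2\bm b$ is injective, since $\frac q2 b_i\in\{0,q/2\}$ determines $b_i$. Hence $|C|=|B|=M$. For distinct $\bm b,\bm b'\in B$, each coordinate difference of the corresponding codewords is either $0$ or $q/2$ modulo $q$. So each coordinate Lee distance is $0$ or $q/2$, with the nonzero ones occurring exactly on the coordinates where $\bm b$ and $\bm b'$ differ. Therefore
\[
d_T(\bm x,\bm x')^2=\frac{q^2}{4}\,d_H(\bm b,\bm b').
\]
Minimising over pairs gives
\[
d_T(C)^2=\frac{q^2}{4}\,d_H(B)=\frac{q^2}{4}\cdot\frac{Mn}{2(M-1)}=\frac{q^2}{8}\cdot\frac{Mn}{M-1},
\]
which matches the upper bound exactly.

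I expect no real obstacle, since the computation is direct. The only points needing care are the arithmetic in the odd case and checking that the translated binary construction preserves cardinality and yields Lee distances exactly in $\{0,q/2\}$. The latter is also consistent with the equality conditions of Theorem~\ref{connections}\ref{item:third-condition}. Such a $B$ exists only when it meets the binary Plotkin bound with equality, so the equality statement is conditional on that; the statement already assumes it.
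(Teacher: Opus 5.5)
Your proof is correct and follows the paper's argument exactly: the inequality comes from combining part (iii) of Theorem~\ref{connections} with the Lee-metric Plotkin bound of Lemma~\ref{leelema}, and the equality case from the identity $d_T^2=(q/2)^2\,d_H(\bm b,\bm b')$ for the scaled binary code. Your explicit check that $\bm b\mapsto \bm a_0+\frac q2\bm b$ is injective, so that $|C|=M$, is a small detail the paper leaves implicit.
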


\begin{proof}
By the third assertion of Theorem~\ref{connections}, we have
\[
d_T(C)\le
\sqrt{\left\lfloor \frac q2\right\rfloor d_L(C)}.
\]
Combining this with Lemma~\ref{leelema}, we obtain
\[
d_T(C)
\le
\begin{cases}
	\dfrac{q}{2\sqrt2}\sqrt{\dfrac{Mn}{M-1}}, & q \text{ even},\\[6mm]
	\sqrt{\dfrac{(q-1)(q^2-1)}{8q}\cdot\dfrac{Mn}{M-1}}, & q \text{ odd}.
\end{cases}
\]
	
	It remains to explain the equality condition in the case $q$ even.
	Suppose
	\[
	C=\left\{\bm a_0+\frac q2\bm b \bmod q:\bm b\in B\right\},
	\]
	where $B\subseteq\mathbb Z_2^n$ and
	$d_H(B)=Mn/(2(M-1))$. For any distinct
	$\bm b,\bm b'\in B$, 
	\[
	d_T\left(\bm a_0+\frac q2\bm b,\bm a_0+\frac q2\bm b'\right)^2
	=d_T\left(\frac q2\bm b, \frac q2\bm b' \right)^2
	=
	\left(\frac q2\right)^2 d_H(\bm b,\bm b').
	\]
	Thus
	\[
	d_T(C)^2
	=
	\left(\frac q2\right)^2 d_H(B)
	=
	\frac{q^2}{4}\cdot \frac{Mn}{2(M-1)}
	=
	\frac{q^2}{8}\cdot\frac{Mn}{M-1}.
	\]
\end{proof}

	\begin{example}
		Let \(q\) be even, and let \(S_r\) be the binary simplex code with parameters
		\[
		[\,2^r-1,\ r,\ 2^{r-1}\,].
		\]
		Every nonzero codeword of \(S_r\) have Hamming weight \(2^{r-1}\) and any two distinct codewords of \(S_r\) has Hamming distance \(2^{r-1}\).
		Replacing \(1\) by \(\frac{q}{2}\) in every codeword, and repeating each coordinate \(t\) times, we obtain a code
		\[
		C\subseteq \mathbb Z_q^{\,t(2^r-1)}
		\]
		with \(|C|=2^r\).
		
		Since any two distinct codewords of \(C\) differ in exactly
		\(t2^{r-1}\) coordinates, and in each such coordinate the Lee distance is
		\(\frac q2\), we have
		\[
		d_T(C)=\sqrt{t\,2^{r-1}\left(\frac{q}{2}\right)^2}=\,2^{\frac{r-3}{2}}q\sqrt{t}.
		\]
		On the other hand, by Theorem \ref{Bq}, we obtain
		$$d_T(C)\leq\dfrac{q}{2\sqrt2}\sqrt{\dfrac{2^rt(2^r-1)}{2^r-1}}=\,2^{\frac{r-3}{2}}q\sqrt{t}.$$
				Therefore, the code \(C\) attains the upper bound in Theorem~\ref{Bq}.
	\end{example}

\subsection{Local Ball--Plotkin Bound}
For $\bm z_0\in\mathbb Z_q^n$ and $\rho\in\mathbb Z_{\ge 0}$, define the ball with center $\bm z_0$ and toroidal radius $\sqrt{\rho}$ to be
\[
B(\bm z_0,\rho)
=
\{\bm x\in\mathbb Z_q^n:d_T(\bm x,\bm z_0)^2\le \rho\}.
\]
By translation invariance, the cardinality of $B(\bm z_0,\rho)$ is independent
of the center $\bm z_0$. We denote this common cardinality by $V_q(n,\rho)$.

Let
\[
m_r=
\#\{a\in\mathbb Z_q:d_L(a,0)=r\}.
\]
More explicitly,
	\[
m_r=
\begin{cases}
	1, & r=0 \text{ or } \left(r=\dfrac q2 \text{ and } q \text{ is even}\right),\\
	2, & 1\le r<\dfrac q2.
\end{cases}
\]
Then
\[
V_q(n,\rho)
=
\sum_{\substack{0\le r_1,\ldots,r_n\le \lfloor q/2\rfloor\\
		r_1^2+\cdots+r_n^2\le \rho}}
m_{r_1}\cdots m_{r_n}.
\]
Equivalently, this number can be computed by the generating function
\[
V_q(n,\rho)
=
\sum_{j=0}^{\rho}
[x^j]\left(\sum_{r=0}^{\lfloor q/2\rfloor}m_r x^{r^2}\right)^n ,
\]
where $[x^j]f(x)$ denotes the coefficient of the monomial
$x^j$ in $f(x)$.

\begin{lemma}
	\label{lem:lambda-toroidal-comparison}
	Let \(S\subseteq\mathbb Z_q^n\) be a finite nonempty set with
	\(|S|=N\), and fix
	\(\bm z_0=(z_1,\dots,z_n)\in\mathbb Z_q^n\). Then
	\[
	\sum_{\bm x,\bm y\in S}d_T(\bm x,\bm y)^2
	\le
	2N\sum_{\bm x\in S}d_T(\bm x,\bm z_0)^2.
	\]
\end{lemma}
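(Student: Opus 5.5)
The plan is to reduce the inequality to a coordinatewise statement in $\mathbb Z_q$ and then apply the triangle inequality for the Lee distance. Since $d_T(\bm x,\bm y)^2=\sum_{i=1}^n d_L(x_i,y_i)^2$ is additive over coordinates, it suffices to show, for each fixed coordinate $i$,
\[
\sum_{\bm x,\bm y\in S} d_L(x_i,y_i)^2\le 2N\sum_{\bm x\in S} d_L(x_i,z_i)^2 .
\]
Summing these $n$ inequalities gives the lemma.

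For a single coordinate, write $u_{\bm x}=d_L(x_i,z_i)\ge 0$. The Lee distance on $\mathbb Z_q$ is a metric, being the graph distance on the $q$-cycle. Hence
\[
d_L(x_i,y_i)\le d_L(x_i,z_i)+d_L(z_i,y_i)=u_{\bm x}+u_{\bm y}.
\]
Squaring and using $(a+b)^2\le 2a^2+2b^2$ gives
\[
d_L(x_i,y_i)^2\le 2u_{\bm x}^2+2u_{\bm y}^2 .
\]
Summing over all ordered pairs $(\bm x,\bm y)\in S\times S$ gives
\[
\sum_{\bm x,\bm y}d_L(x_i,y_i)^2\le 2N\sum_{\bm x}u_{\bm x}^2+2N\sum_{\bm y}u_{\bm y}^2=4N\sum_{\bm x}u_{\bm x}^2 ,
\]
which is a factor of $2$ too weak. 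The crude bound therefore does not suffice, and the cross terms must be exploited.

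The main obstacle is to gain that factor of $2$. The Euclidean model suggests how. If the $u$'s were signed real displacements $x_i-z_i$, then
\[
\sum_{\bm x,\bm y}(x_i-y_i)^2=2N\sum_{\bm x} (x_i-z_i)^2-2\Bigl(\sum_{\bm x}(x_i-z_i)\Bigr)^2\le 2N\sum_{\bm x}(x_i-z_i)^2,
\]
because the squared sum is nonnegative. I will imitate this by lifting each residue $x_i$ to its representative $\tilde x_i=z_i+s_{\bm x}$, where $s_{\bm x}\in\mathbb Z$ satisfies $|s_{\bm x}|=d_L(x_i,z_i)$. Such a choice is possible, taking $s_{\bm x}\in(-q/2,q/2]$. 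Then $d_L(x_i,y_i)\le |s_{\bm x}-s_{\bm y}|$, since $s_{\bm x}-s_{\bm y}\equiv x_i-y_i \pmod q$ and the Lee distance is the minimum absolute value over all lifts. Consequently
\[
\sum_{\bm x,\bm y}d_L(x_i,y_i)^2\le\sum_{\bm x,\bm y}(s_{\bm x}-s_{\bm y})^2=2N\sum_{\bm x} s_{\bm x}^2-2\Bigl(\sum_{\bm x} s_{\bm x}\Bigr)^2\le 2N\sum_{\bm x} d_L(x_i,z_i)^2 .
\]
This is exactly the coordinatewise claim. Summing over $i$ finishes the proof; the only points to verify carefully are the existence of the lift and the inequality $d_L(a,b)\le|\tilde a-\tilde b|$ for any integer lifts.
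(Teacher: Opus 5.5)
Your proposal is correct and matches the paper's proof: the paper likewise lifts each $x_s-z_s$ to an integer representative $\bar{x}_s$ with $|\bar{x}_s|=d_L(x_s,z_s)$, bounds $d_L(x_s,y_s)\le|\bar{x}_s-\bar{y}_s|$, and expands $\sum_{\bm x,\bm y}(\bar{x}_s-\bar{y}_s)^2=2N\sum_{\bm x}\bar{x}_s^2-2\bigl(\sum_{\bm x}\bar{x}_s\bigr)^2$ before summing over coordinates. Your preliminary triangle-inequality attempt, which only gives the constant $4N$, is a harmless detour that the final argument correctly replaces.
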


\begin{proof}
	For each \(\bm x=(x_1,\dots,x_n)\in S\) and each coordinate
	\(1\le s\le n\), choose an integer representative \(\bar{x}_s\) of
	\(x_s-z_s\pmod q\) such that
	\[
	|\bar{x}_s|=d_L(x_s,z_s).
	\]
	Then, for any \(\bm x,\bm y\in S\),
	\[
	d_L(x_s,y_s)
	=
	d_L(x_s-z_s,y_s-z_s)
	\le
	|\bar{x}_s-\bar{y}_s|.
	\]
	Hence
\begin{equation}
\label{xs1}
	\begin{aligned}
		\sum_{\bm x,\bm y\in S}d_T(\bm x,\bm y)^2
		&=
		\sum_{\bm x,\bm y\in S}\sum_{s=1}^n d_L(x_s,y_s)^2  \\
		&\le
		\sum_{s=1}^n
		\sum_{\bm x,\bm y\in S}
		(\bar{x}_s-\bar{y}_s)^2 .
	\end{aligned}
\end{equation}
	For each fixed \(s\), we have
	\begin{equation}
\label{xs2}
	\begin{aligned}
		\sum_{\bm x,\bm y\in S}
		(\bar{x}_s-\bar{y}_s)^2
		&=
		\sum_{\bm x\in S}\sum_{\bm y\in S}\bar{x}_s^2
		-
		2\sum_{\bm x\in S}\sum_{\bm y\in S}\bar{x}_s\bar{y}_s
		+
		\sum_{\bm x\in S}\sum_{\bm y\in S}\bar{y}_s^2 \\
		&=
		N\sum_{\bm x\in S}\bar{x}_s^2
		-
		2\left(\sum_{\bm x\in S}\bar{x}_s\right)
		\left(\sum_{\bm y\in S}\bar{y}_s\right)
		+
		N\sum_{\bm y\in S}\bar{y}_s^2 \\
		&=
		2N\sum_{\bm x\in S}\bar{x}_s^2
		-
		2\left(\sum_{\bm x\in S}\bar{x}_s\right)^2 \\
		&\le
		2N\sum_{\bm x\in S}\bar{x}_s^2 .
	\end{aligned}
	\end{equation}
	Combining \eqref{xs1} and \eqref{xs2}, we obtain
	\[
	\begin{aligned}
		\sum_{\bm x,\bm y\in S}d_T(\bm x,\bm y)^2
		&\le
		2N\sum_{s=1}^n\sum_{\bm x\in S}\bar{x}_s^2 \\
		&=
		2N\sum_{\bm x\in S}\sum_{s=1}^n d_L(x_s,z_s)^2 \\
		&=
		2N\sum_{\bm x\in S}d_T(\bm x,\bm z_0)^2.
	\end{aligned}
	\]
\end{proof}

We now apply the preceding comparison to subsets of codewords contained
in a toroidal ball.
	\begin{theorem}[Local Ball--Plotkin Bound]
		\label{thm:local-ball-plotkin}
		Let $C\subseteq\mathbb Z_q^n$ be a code of size $M\ge2$. For each
		$\rho\in\mathbb Z_{\ge0}$, set
		\[
		L_\rho=\left\lceil\frac{M V_q(n,\rho)}{q^n} \right\rceil.
		\]
		Then
		\[
		d_T(C)^2
		\le
		\min_{\substack{\rho\in\mathbb Z_{\ge0}\\ L_\rho\ge2}}
		2\rho\frac{L_\rho}{L_\rho-1}.
		\]
	\end{theorem}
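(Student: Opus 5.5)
The plan is to combine a pigeonhole (averaging) argument over ball centers with Lemma~\ref{lem:lambda-toroidal-comparison}. Fix any $\rho\in\mathbb Z_{\ge0}$ with $L_\rho\ge2$; it suffices to show $d_T(C)^2\le 2\rho L_\rho/(L_\rho-1)$, and then take the minimum over such $\rho$.

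\emph{Step 1 (a heavy ball).} Count the pairs $(\bm z_0,\bm c)\in\mathbb Z_q^n\times C$ with $\bm c\in B(\bm z_0,\rho)$. For each fixed $\bm c$, the centers $\bm z_0$ with $d_T(\bm c,\bm z_0)^2\le\rho$ form a ball of size $V_q(n,\rho)$, by translation invariance and the symmetry of $d_T$. So the total count is $MV_q(n,\rho)$. Averaging over the $q^n$ centers, some $\bm z_0$ satisfies $|C\cap B(\bm z_0,\rho)|\ge MV_q(n,\rho)/q^n$. Since this cardinality is an integer, it is at least $L_\rho$. Choose a subset $S\subseteq C\cap B(\bm z_0,\rho)$ with $|S|=N=L_\rho\ge2$.

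\emph{Step 2 (apply the comparison lemma).} By Lemma~\ref{lem:lambda-toroidal-comparison},
\[
\sum_{\bm x,\bm y\in S}d_T(\bm x,\bm y)^2\le 2N\sum_{\bm x\in S}d_T(\bm x,\bm z_0)^2\le 2N\cdot N\rho ,
\]
where the last inequality uses $d_T(\bm x,\bm z_0)^2\le\rho$ for every $\bm x\in S$.

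\emph{Step 3 (lower bound by the minimum distance).} The left-hand sum runs over ordered pairs, and diagonal terms vanish. The remaining $N(N-1)$ ordered pairs of distinct codewords each contribute at least $d_T(C)^2$. Hence $N(N-1)d_T(C)^2\le 2N^2\rho$, that is, $d_T(C)^2\le 2\rho N/(N-1)=2\rho L_\rho/(L_\rho-1)$.

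The main point needing care is Step 1. One must justify that the number of centers covering a fixed codeword equals $V_q(n,\rho)$, which follows from $d_T(\bm c,\bm z_0)=d_T(\bm z_0-\bm c,\bm 0)$. One must also check that taking the ceiling is legitimate because the intersection size is an integer. Everything else is direct substitution.
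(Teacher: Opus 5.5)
Your proposal is correct and follows essentially the same route as the paper: average over ball centers to find a ball containing at least $L_\rho$ codewords, apply Lemma~\ref{lem:lambda-toroidal-comparison} with $d_T(\bm x,\bm z_0)^2\le\rho$, and compare with the lower bound $N(N-1)d_T(C)^2$. The only cosmetic difference is that you trim $S$ to exactly $L_\rho$ points, whereas the paper keeps all $N\ge L_\rho$ points of $C\cap B(\bm z_0,\rho)$ and then uses that $N/(N-1)$ is decreasing in $N$.
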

	
	\begin{proof}
		Fix $\rho\in\mathbb Z_{\ge0}$ with $L_\rho\ge2$. For
		$\bm z\in\mathbb Z_q^n$, recall that
		\[
		B(\bm z,\rho)
		=
		\{\bm x\in\mathbb Z_q^n:d_T(\bm x,\bm z)^2\le\rho\}.
		\]
		Counting pairs $(\bm z,\bm c)$ with
		$\bm z\in\mathbb Z_q^n$, $\bm c\in C$, and
		$\bm c\in B(\bm z,\rho)$ gives
		\[
		\frac1{q^n}\sum_{\bm z\in\mathbb Z_q^n}
		|C\cap B(\bm z,\rho)|
		=
		\frac{M V_q(n,\rho)}{q^n}.
		\]
		Since $|C\cap B(\bm z,\rho)|$ is an integer for each $\bm z$, there
		exists some $\bm z_0\in\mathbb Z_q^n$ such that
		\[
		|C\cap B(\bm z_0,\rho)|
		\ge
		\left\lceil
		\frac{M V_q(n,\rho)}{q^n}
		\right\rceil
		=
		L_\rho.
		\]
		Let $$S=C\cap B(\bm z_0,\rho)$$ and $|S|=N$, which implies
		$N\ge L_\rho$.	
			Since $S\subseteq B(\bm z_0,\rho)$, we have
		$d_T(\bm x,\bm z_0)^2\le \rho$ for every $\bm x\in S$, yielding
	\begin{equation}
		\label{dtlemudaq0}
		\sum_{\bm x\in S} d_T(\bm x,\bm z_0)^2\le N\rho.
	\end{equation}
		By Lemma~\ref{lem:lambda-toroidal-comparison}, we have
	\begin{equation}
			\label{dtlemudaqp}
			\sum_{\bm x,\bm y\in S} d_T(\bm x,\bm y)^2
		\le
		2 N\sum_{\bm x\in S} d_T(\bm x,\bm z_0)^2.
	\end{equation}
Combining \eqref{dtlemudaq0} and \eqref{dtlemudaqp}, we obtain
		\begin{equation}
			\label{ledt}
			\sum_{\bm x,\bm y\in S}d_T(\bm x,\bm y)^2
			\le
			2 N^2\rho.
		\end{equation}

		On the other hand, since every ordered pair
		$(\bm x,\bm y)\in S^2$ with $\bm x\ne\bm y$ contributes at
		least $d_T(C)^2$, we also have
	\begin{equation}
		\label{gedt}
		N(N-1)d_T(C)^2
		\le
		\sum_{\bm x,\bm y\in S}d_T(\bm x,\bm y)^2
		.
	\end{equation}
		Therefore we obtain $$N(N-1)d_T(C)^2\le2N^2\rho$$ by \eqref{ledt} and \eqref{gedt}, which implies 
\[
d_T(C)^2
\le
2\rho\frac{N}{N-1}
\le
2\rho\frac{L_\rho}{L_\rho-1}
\]
		since $N\ge L_\rho$. Taking the
		minimum over all $\rho$ with $L_\rho\ge2$ shows the result.
	\end{proof}

	\begin{example}
		Let
		\[
		C=\{(0,0),(2,0),(0,2),(2,2)\}\subseteq\mathbb Z_4^2.
		\]
		Then $|C|=4$ and a direct computation shows that $d_T(C)=2$.
		
		We show that this code is optimal by Theorem~\ref{thm:local-ball-plotkin}.
		For $q=4$, $n=2$, $M=4$, and $\rho=1$, the volume of $B(\bm 0,1)$ is
		\[
		V_4(2,1)=5.
		\]
		Thus
		\[
		L_\rho
		=
		\left\lceil\frac{M V_4(2,1)}{4^2}\right\rceil
		=
		\left\lceil\frac{4\cdot 5}{16}\right\rceil
		=
		2.
		\]
		By Theorem~\ref{thm:local-ball-plotkin}, for every code
		$C'\subseteq\mathbb Z_4^2$ with $|C'|=4$, we have
		\[
		d_T(C')^2
		\le
		2\rho\frac{L_\rho}{L_\rho-1}
		=
		4.
		\]
		Thus every $4$-point code $C'\subseteq\mathbb Z_4^2$ satisfies
		$d_T(C')\le2$. Since the above code $C$ satisfies $d_T(C)=2$, it
		attains the upper bound in Theorem~\ref{thm:local-ball-plotkin} and is
		therefore an MTD code.
	\end{example}
	\subsection{Delsarte Linear Programming Bound}	
	
The Delsarte linear programming method is based on the
association-scheme framework introduced by Delsarte~\cite{Delsarte73}.
For the Lee metric, Sol\'e studied the Lee association
scheme~\cite{Sole88}, and Astola and Tabus applied this method to linear Lee codes in~\cite{AstolaTabus}. In this section, we adapt this framework to the
toroidal metric by expressing the squared toroidal distance in terms of
Lee compositions.

	For \(\bm z=(z_1,\dots,z_n)\in\mathbb Z_q^n\), its \emph{Lee composition} is
	\[
	\operatorname{comp}_L(\bm z)
	=
	(\alpha_0,\alpha_1,\ldots,\alpha_{\lfloor q/2\rfloor}),
	\]
	where
	\[
	\alpha_r
	=
	\#\{1\le t\le n:d_L(z_t,0)=r\},
	\qquad 0\le r\le \lfloor q/2\rfloor.
	\]
	Let \(\mathcal L_q(n)\) denote the set of all Lee compositions in
	\(\mathbb Z_q^n\), namely
	\[
	\mathcal L_q(n)
	=
	\left\{
	(\alpha_0,\alpha_1,\ldots,\alpha_{\lfloor q/2\rfloor})
	\middle|
	\sum_{r=0}^{\lfloor q/2\rfloor}\alpha_r=n,\ 
	\alpha_r\in\mathbb Z_{\ge0}
	\right\}.
	\]
For \(\alpha,\beta\in\mathcal L_q(n)\), write
\[
\alpha=(\alpha_0,\alpha_1,\ldots,\alpha_{\lfloor q/2\rfloor}),
\qquad
\beta=(\beta_0,\beta_1,\ldots,\beta_{\lfloor q/2\rfloor}).
\]
Put \(\zeta=e^{2\pi \mathrm i/q}\). For \(0\le r\le \lfloor q/2\rfloor\), define
\[
P_r(y_0,y_1,\ldots,y_{\lfloor q/2\rfloor})
=
\sum_{a\in\mathbb Z_q} y_{d_L(a,0)}\zeta^{ar}.
\]
The \emph{Lee--Krawtchouk coefficient} \(K_\beta(\alpha)\) is 
\[
K_\beta(\alpha)
=
\left[
y_0^{\beta_0}y_1^{\beta_1}\cdots
y_{\lfloor q/2\rfloor}^{\beta_{\lfloor q/2\rfloor}}
\right]
\prod_{r=0}^{\lfloor q/2\rfloor}
P_r(y_0,y_1,\ldots,y_{\lfloor q/2\rfloor})^{\alpha_r},
\]
where
\[
\left[
y_0^{\beta_0}y_1^{\beta_1}\cdots
y_{\lfloor q/2\rfloor}^{\beta_{\lfloor q/2\rfloor}}
\right]f
\]
denotes the coefficient of
\(y_0^{\beta_0}y_1^{\beta_1}\cdots
y_{\lfloor q/2\rfloor}^{\beta_{\lfloor q/2\rfloor}}\) in the polynomial \(f\).

	For a code \(C\subseteq\mathbb Z_q^n\) with \(|C|=M\), and for each
	Lee composition \(\alpha\in\mathcal L_q(n)\), let
\[
A_\alpha(C)
=
\frac1{|C|}
\#\{(\bm x,\bm y)\in C^2 :
\operatorname{comp}_L(\bm x-\bm y)=\alpha\}.
\]
Since the ordered pairs \((\bm x,\bm y)\in C^2\) are partitioned according to the Lee composition of \(\bm x-\bm y\), we have
\[
\sum_{\alpha\in\mathcal{L}_q(n)}A_\alpha(C) = \frac{1}{M}|C|^2 = M.
\]
	Put	
	\[
	D(\alpha)
	=
	\sum_{r=1}^{\lfloor q/2\rfloor} r^2\alpha_r.
	\]
Thus, if \(\operatorname{comp}_L(\bm x-\bm y)=\alpha\), then
\[
d_T(\bm x,\bm y)^2=D(\alpha).
\]

	Let
$v_\alpha$ denote the number of vectors in $\mathbb Z_q^n$ with Lee composition
$\alpha$. Explicitly,
\[
v_\alpha
=
\binom{n}{\alpha_0,\alpha_1,\ldots,\alpha_{\lfloor \frac{q}{2}\rfloor}}
\prod_{r=0}^{\lfloor \frac{q}{2}\rfloor} m_r^{\alpha_r},
\]
where the multinomial coefficient is
\[
\binom{n}{\alpha_0,\alpha_1,\ldots,\alpha_{\lfloor q/2\rfloor}}
=
\frac{n!}{\alpha_0!\alpha_1!\cdots \alpha_{\lfloor q/2\rfloor}!}
\]
and
\[
m_r=\#\{a\in\mathbb Z_q:d_L(a,0)=r\}=
\begin{cases}
	1, & r=0 \text{ or } \left(r=\dfrac q2 \text{ and } q \text{ is even}\right),\\
	2, & 1\le r<\dfrac q2.
\end{cases}
\] 

For \(\delta\in\mathbb Z_{>0}\), let \(L_q(n,\delta)\) denote the optimal value of
\begin{equation}
\label{Lqndeta}
\begin{aligned}
	L_q(n,\delta)=\max\quad
	&\sum_{\alpha\in\mathcal L_q(n)}B_\alpha\\
	\text{subject to}\quad
	&\left\{
	\begin{aligned}
		&B_{(n,0,\ldots,0)}=1,\qquad 0\le B_\alpha\le v_\alpha,\\
		&B_\alpha=0\quad\text{\rm if }0<D(\alpha)<\delta,\\
		&\sum_{\alpha\in\mathcal L_q(n)}
		K_\beta(\alpha)B_\alpha\ge0
		\quad\text{\rm for all }\beta\in\mathcal L_q(n)
	\end{aligned}
	\right.
\end{aligned}
\end{equation}	
where all \(B_\alpha\) are real variables indexed by
\(\alpha\in\mathcal L_q(n)\) and \(K_\beta(\alpha)\) denotes the
Lee--Krawtchouk coefficient.

	\begin{theorem}[Delsarte LP Bound for Toroidal Distance]
		\label{thm:delsarte-lp-toroidal}
		Let \(C\subseteq \mathbb Z_q^n\) be a code with \(|C|=M\ge2\). Then
		\[
		d_T(C)\le \sqrt{\delta_0-1},
		\]
		where
		\[
		\delta_0=
		\min\{\delta\in\mathbb Z_{>0}: M>L_q(n,\delta)\},
		\]
		and \(L_q(n,\delta)\) is defined in~\eqref{Lqndeta}.
	\end{theorem}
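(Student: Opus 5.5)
The argument is the standard Delsarte one. We show that the distance distribution of any code is feasible for the linear program \eqref{Lqndeta}, and then read off the bound by monotonicity in $\delta$.

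\textbf{Feasibility of the distance distribution.} Suppose, for contradiction, that $d_T(C)^2\ge \delta_0$. Since $d_T(C)^2$ is a positive integer, this is the only case to exclude. Put $\delta=\delta_0$ and $B_\alpha=A_\alpha(C)$ for every $\alpha\in\mathcal L_q(n)$. We check the constraints of \eqref{Lqndeta} in turn.
\begin{enumerate}[label=(\arabic*)]
\item $B_{(n,0,\dots,0)}=\frac1M\#\{(\bm x,\bm x)\}=1$, because $\bm x-\bm y=\bm 0$ exactly when $\bm x=\bm y$.
\item $0\le B_\alpha\le v_\alpha$. For each fixed $\bm x$, the number of $\bm y$ with $\operatorname{comp}_L(\bm x-\bm y)=\alpha$ is at most $v_\alpha$; averaging over $\bm x\in C$ gives the bound.
\item If $0<D(\alpha)<\delta$, then any pair with composition $\alpha$ satisfies $\bm x\ne\bm y$ and $d_T(\bm x,\bm y)^2=D(\alpha)<\delta\le d_T(C)^2$. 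This is impossible, so $B_\alpha=0$.
\item The Delsarte inequalities $\sum_\alpha K_\beta(\alpha)B_\alpha\ge0$ hold; this is the main step, treated below.
\end{enumerate}
Granting all four, we get $M=\sum_\alpha A_\alpha(C)\le L_q(n,\delta_0)$, which contradicts the definition of $\delta_0$. Hence $d_T(C)^2\le\delta_0-1$.

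\textbf{The Delsarte inequalities.} We use characters of $\mathbb Z_q^n$. For $\bm u\in\mathbb Z_q^n$, let $\chi_{\bm u}(\bm z)=\zeta^{\bm u\cdot\bm z}$. The key identity to establish is
\[
\sum_{\bm u:\operatorname{comp}_L(\bm u)=\beta}\chi_{\bm u}(\bm z)=K_\beta(\alpha)\quad\text{whenever }\operatorname{comp}_L(\bm z)=\alpha.
\]
To prove it, expand the product
\[
\prod_{t=1}^n\sum_{a\in\mathbb Z_q}y_{d_L(a,0)}\zeta^{az_t}.
\]
Each coordinate contributes the factor $P_{r}$ for some $r$. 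This requires $P_r$ to depend only on $d_L(z_t,0)=r$, which holds because the sum is invariant under $a\mapsto -a$, so $\zeta^{az}$ and $\zeta^{-az}$ pair up. The product therefore equals $\prod_r P_r^{\alpha_r}$. Reading off the coefficient of $y^\beta$ selects exactly the vectors $\bm u$ of composition $\beta$.

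Given the identity, we compute
\[
\sum_\alpha K_\beta(\alpha)A_\alpha(C)=\frac1M\sum_{\bm u:\operatorname{comp}_L(\bm u)=\beta}\sum_{\bm x,\bm y\in C}\chi_{\bm u}(\bm x-\bm y)=\frac1M\sum_{\bm u}\Bigl|\sum_{\bm x\in C}\chi_{\bm u}(\bm x)\Bigr|^2\ge0.
\]
This gives constraint (4).

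\textbf{Main obstacle.} I expect the main difficulty to be the careful justification that the coefficient extraction defining $K_\beta(\alpha)$ matches the character sum. Two points need attention: that $P_r$ is well-defined as a function of the Lee class only, and that the identity is independent of the representative $\bm z$ within the composition class $\alpha$. The remaining steps are bookkeeping.
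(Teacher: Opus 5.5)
Your proposal is correct and follows essentially the same route as the paper: you show that the inner distribution $A_\alpha(C)$ is feasible for the linear program with $\delta=\delta_0$, which forces $M\le L_q(n,\delta_0)$ and hence $d_T(C)^2\le\delta_0-1$. The feasibility checks are the same ones the paper uses (normalization, $0\le A_\alpha\le v_\alpha$, vanishing for $0<D(\alpha)<\delta$, and Delsarte positivity via $\frac1M\sum_{\bm u}|\sum_{\bm x\in C}\chi_{\bm u}(\bm x)|^2\ge 0$). Your expansion of $\prod_t\sum_a y_{d_L(a,0)}\zeta^{az_t}$ using the $a\mapsto -a$ symmetry proves the identification of the character sum with $K_\beta(\alpha)$, which the paper only asserts; and no monotonicity in $\delta$ is actually needed, since you argue directly at $\delta=\delta_0$.
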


	\begin{proof}
Since the Lee composition of $\bm 0$ is \((n,0,\ldots,0)\), we have
\[
A_{(n,0,\ldots,0)}(C)=1.
\]
 Also $A_\alpha(C)\ge0$ for every Lee composition $\alpha$.
 
		For each fixed
		$\bm x\in C$, there are at most $v_\alpha$ possible differences
		$\bm x-\bm y$ with Lee composition $\alpha$. Hence
\[
\begin{aligned}
	M A_\alpha(C)
	&=
	\left|
	\left\{
	(\bm x,\bm y)\in C^2:
	\operatorname{comp}_L(\bm x-\bm y)=\alpha
	\right\}
	\right| \\
	&\le
	\left|
	\left\{
	(\bm x,\bm y)\in C\times \mathbb Z_q^n:
	\operatorname{comp}_L(\bm x-\bm y)=\alpha
	\right\}
	\right| \\
	&=
	M v_\alpha
\end{aligned}
\]
		and  $A_\alpha(C)\le v_\alpha$.
				
Assume now that \(d_T(C)^2\ge\delta\). 
	There do not exist two distinct codewords $\bm x, \bm y \in C$  such that \(\operatorname{comp}_L(\bm x-\bm y)=\alpha\)  if	$0<D(\alpha)<\delta
	$, which implies
		\[
		A_\alpha(C)=0
		\qquad
		\text{whenever }0<D(\alpha)<\delta.
		\]
		
		It remains to justify the Delsarte positivity constraints. 
		For  \(\bm u, \bm z\in\mathbb Z_q^n\), define the additive character
		$$
		\chi_{\bm u}(\bm z)=\zeta^{\bm u\cdot \bm z},$$  where
			$\zeta=e^{2\pi \mathrm i/q}$.
		Then for every \(\bm u\in\mathbb Z_q^n\),
		\[
		\frac1M\sum_{\bm x,\bm y\in C}
		\chi_{\bm u}(\bm x-\bm y)
		=
		\frac1M\sum_{\bm x,\bm y\in C}
		\chi_{\bm u}(\bm x)\overline{\chi_{\bm u}(\bm y)}
		=
		\frac1M
		\left|
		\sum_{\bm x\in C}\chi_{\bm u}(\bm x)
		\right|^2
		\ge0.
		\]
	 Now fix a Lee composition \(\beta\in\mathcal L_q(n)\). Summing the preceding inequality over all characters \(\chi_{\bm u}\) with
		\(\operatorname{comp}_L(\bm u)=\beta\), we obtain
		\[
		\frac1M
		\sum_{\substack{\bm u\in\mathbb Z_q^n\\
				\operatorname{comp}_L(\bm u)=\beta}}
		\sum_{\bm x,\bm y\in C}
		\chi_{\bm u}(\bm x-\bm y)
		\ge0.
		\]
		Grouping the pairs \((\bm x,\bm y)\) according to
		\(\operatorname{comp}_L(\bm x-\bm y)=\alpha\), yields
		\[
		\sum_{\alpha\in\mathcal L_q(n)}
		\left(
		\sum_{\substack{\bm u\in\mathbb Z_q^n\\
				\operatorname{comp}_L(\bm u)=\beta}}
		\chi_{\bm u}(\bm z_\alpha)
		\right)
		A_\alpha(C)
		\ge0,
		\]
		where \(\bm z_\alpha\) is any vector of Lee composition \(\alpha\).  The
		inner sum is actually the Lee--Krawtchouk
		coefficient \(K_\beta(\alpha)\).  Hence
		\[
		\sum_{\alpha\in\mathcal L_q(n)}
		K_\beta(\alpha)A_\alpha(C)
		\ge0
		\qquad
		\text{for all }\beta\in\mathcal L_q(n).
		\]
		They are exactly the Delsarte positivity constraints for the Lee
		association scheme.

	We now relate the actual inner distribution of the code \(C\) to the
	linear program defining \(L_q(n,\delta)\).  The quantities
	\(B_\alpha\) in the linear program are abstract optimization variables,
	whereas \(A_\alpha(C)\) is the Lee inner distribution determined by the
	given code \(C\).  If \(d_T(C)^2\ge \delta\), then the preceding arguments
	show that the family \(\{A_\alpha(C)\}_{\alpha\in\mathcal L_q(n)}\)
	satisfies all constraints imposed on the variables \(B_\alpha\) in the
	linear program.  Indeed, we may take
	\[
	B_\alpha=A_\alpha(C)
	\qquad
	\text{for all }\alpha\in\mathcal L_q(n).
	\]
	With this choice, the objective value of the linear program is
	\[
	\sum_{\alpha\in\mathcal L_q(n)}B_\alpha
	=
	\sum_{\alpha\in\mathcal L_q(n)}A_\alpha(C)
	=
	M.
	\]
	Therefore, whenever \(d_T(C)^2\ge \delta\), the linear program has a
	feasible solution with objective value \(M\).  Hence
	\[
	M\le L_q(n,\delta).
	\]
	Consequently, if \(M>L_q(n,\delta)\), then no code of size \(M\) can
	satisfy \(d_T(C)^2\ge\delta\).  Since \(d_T(C)^2\) is an integer, it
	follows that
	\[
	d_T(C)^2\le \delta-1.
	\]
	Taking the smallest integer \(\delta\) for which \(M>L_q(n,\delta)\), we
	obtain
	\[
	d_T(C)^2\le \delta_0-1,
	\qquad
	\delta_0=
	\min\{\delta\in\mathbb Z_{>0}: M>L_q(n,\delta)\}.
	\]
	Thus
	\[
	d_T(C)\le \sqrt{\delta_0-1}.
	\]
	This completes the proof.

	\end{proof}

	\begin{example}
		Let \(n=8\), \(q=4\), and
		\[
		C=2E_8\cap \mathbb Z_4^8,
		\]
		where 
		\begin{equation}
			\label{E8}
		E_8
		=
		\left\{
		\bm x=(x_1,\dots,x_8)\in\mathbb R^8 :
		\begin{array}{l}
			\text{either }x_i\in\mathbb Z\ \text{for all }i,\\[1ex]
			\text{or }x_i\in\mathbb Z+\dfrac12\ \text{for all }i,
		\end{array}
		\ \text{and }\sum_{i=1}^8 x_i\equiv0\pmod 2
		\right\}.
		\end{equation} S.~Liu and A.~Sakzad proved in
		\cite{LiuSakzad2026} that this code satisfies
		\[
		M=|C|=2^8,\qquad d_T(C)=2\sqrt{2}.
		\]
		
		We now substitute these parameters into the Delsarte linear programming bound.
		For \(q=4\), each difference vector in \(\mathbb Z_4^8\) is classified by a
		Lee composition indexed by elements $\alpha$ in
		\[
	\mathcal L_4(8)
		:=\{(\alpha_0,\alpha_1,\alpha_2)\in \mathbb Z_{\ge 0}^3:
		\alpha_0+\alpha_1+\alpha_2= 8\},
		\]
		where \(\alpha_r\) counts the coordinates of Lee weight \(r\). The corresponding
		squared toroidal distance is
		\[
		D(\alpha)=\alpha_1+4\alpha_2,
		\]
		and 
		\[
		v_\alpha
		=
		2^{\alpha_1}\binom{8}{\alpha_1}\binom{8-\alpha_1}{\alpha_2}.
		\]
		In particular, \(A_{(8,0,0)}(C)=1\), and
		\[
		\sum_{\alpha\in\mathcal L_4(8)} A_\alpha(C)=M.
		\]
		
	For \(\delta\in\mathbb Z_{>0}\), the relevant Delsarte linear programming bound is
		\[
		\begin{aligned}
			L_4(8,\delta)
			=
			\max\quad
			&\sum_{\alpha\in\mathcal L_4(8)} B_\alpha \\
			\text{\rm subject to}\quad
			&\left\{
			\begin{aligned}
				&B_{(8,0,0)}=1,\qquad 0\le B_\alpha\le v_\alpha,\\
				&B_\alpha=0
				\qquad
				\text{\rm if }0<D(\alpha)<\delta,\\
				&\sum_{\alpha\in\mathcal L_4(8)}
				K_\beta(\alpha)B_\alpha\ge 0
				\qquad
				\text{\rm for all }\beta\in\mathcal L_4(8).
			\end{aligned}
			\right.
		\end{aligned}
		\]
		Here, for \(\alpha=(\alpha_0, \alpha_1,\alpha_2)\) and
		\(\beta=(\beta_0, \beta_1,\beta_2)\), the Krawtchouk coefficients are
		\[
		K_\beta(\alpha)
		=
		[X^{\beta_1}Y^{\beta_2}]
		(1+2X+Y)^{8-\alpha_1-\alpha_2}
		(1-Y)^{\alpha_1}
		(1-2X+Y)^{\alpha_2}.
		\]
		
	The optimal values of the linear programming bound for different values of
	\(\delta\) are shown below. 
		\begin{table}[htbp]
			\centering
			
			\renewcommand{\arraystretch}{1.5}
			\setlength{\tabcolsep}{18pt}
			
			\begin{tabular}{c|c}
				\hline
				$\delta$ & $L_4(8,\delta)$\\
				\hline
				7 & 320.8406574513\\
				8 & 256.0000000000\\
				9 & 106.1293559098\\
				\hline
			\end{tabular}
			
			\caption{Delsarte linear programming bounds for \(q=4\) and \(n=8\)}
			\label{tab:lp-bound}
		\end{table}

From Table~\ref{tab:lp-bound}, we obtain that 
$
\delta_0=
\min\{\delta\ge1 : 256>L_q(8,\delta)\}=9.
$		
		Therefore, for any code \(C'\subseteq\mathbb Z_4^8\) with $|C'|= 2^8$, the Delsarte linear programming bound gives
		\[
		d_T(C')\le \sqrt{\delta_0-1}=2\sqrt{2}.
		\]
		Since the code \(C=2E_8\cap\mathbb Z_4^8\) has
		\[
		M=256,\qquad d_T(C)^2=8,
		\]
		it attains the Delsarte linear programming bound.
	\end{example}

\section{Maximum Toroidal Distance (MTD) codes}
\label{MTDcode}

\begin{definition}[Maximum Toroidal Distance (MTD) codes]
	Let \(q\ge 2\) and \(\ell\ge 1\) be integers, where \(\ell\) is a power of \(2\).
	Let
	\[
	C\subseteq \mathbb Z_q^\ell,\qquad |C|=2^\ell .
	\]
	Then \(C\) is called a \emph{maximum toroidal distance code}
	(\emph{MTD code}) if
	\[
	C
	\in
	\operatorname*{arg\,max}_{\substack{ C'\subseteq \mathbb Z_q^\ell\\
			|C'|=2^\ell}}
	d_T(C').
	\]
\end{definition}
Equivalently, an MTD code is a codebook of size \(2^\ell\) in
\(\mathbb Z_q^\ell\) whose minimum toroidal distance is as large as possible.

 MTD codes for small parameters $\ell$ and $q$ can be found using exhaustive search algorithms, which require approximately \(\binom{q^\ell}{2^\ell}\) operations. 
	As \(q\) and \(\ell\) increase, the computational cost of exhaustive search algorithms grows exponentially. In Table~\ref{tabl1}, we list the minimum toroidal distances of several MTD codes obtained by exhaustive search.
	
	\begin{table}[htbp]
	
		\centering
		\normalsize
		\renewcommand{\arraystretch}{1.0}
			
		\begin{tabular}{c c c c c c c c}
						\hline
			$d_T$ & $q=2$ & $q=3$ & $q=4$ & $q=5$ & $q=6$ & $q=7$& $q=8$ \\
			\hline
			$\ell=1$ & 1 & 1 & 2 & 2 & 3 & 3 & 4 \\
			\hline
			$\ell=2$ & 1  & 1  & 2  & $\sqrt{5}$ &  3 & $\sqrt{10}$ & $\sqrt{17}$  \\
			\hline
			$\ell=4$ & 1  &  $\sqrt{2}$ & 2  & $\sqrt{6}$  & 2$\sqrt{3}$  &  &   \\
			\hline
						
		\end{tabular}
		\caption{Minimum toroidal distances $d_T$ of MTD codes for different $\ell$ and $q$}
		\label{tabl1}
	\end{table}

	For the cases \(\ell=4\), \(q=7\) and \(q=8\), the exhaustive search
	algorithm fails to return results in a reasonable time. Therefore, we try to find other methods for constructing MTD codes.

	\subsection{$\ell=2$}
	\begin{theorem}[MTD codes for \(\ell=2\)]
		Let \(q\ge 2\), and write
		\(
		a=\left\lfloor\frac q2\right\rfloor
.
		\)
		Among all \(4\)-point codes \(\mathcal C\subseteq\mathbb Z_q^2\), there exists an MTD code which, up to translation and torus isometry, is of the form
		\[
		\mathcal C
		=
		\{(0,0),(\gamma^*,a),(a,q-\gamma^*),(a+\gamma^*,a-\gamma^*)\},
		\]
		where
		\[
		\gamma^*
		\in
		\operatorname*{arg\,max}_{\gamma\in
			\{\lfloor   q-\sqrt{q^2-(q-a)^2} \rfloor,\lfloor q-\sqrt{q^2-(q-a)^2} \rfloor+1\}}
		\min\left\{
		\sqrt{a^2+\gamma^2},
		\sqrt{(a+\gamma-q)^2+(\gamma-a)^2}
		\right\}.
		\]
	\end{theorem}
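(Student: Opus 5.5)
The plan is to prove two things: a matching upper bound showing that no $4$-point code beats the displayed family, and a computation of $d_T(\mathcal C)$ for the displayed code. Write $\bm u=(\gamma,a)$ and $\bm v=(a,q-\gamma)\equiv(a,-\gamma)$. The code is then $\{\bm 0,\bm u,\bm v,\bm u+\bm v\}$, a ``parallelogram'' on the torus whose two edge vectors are orthogonal and of equal length.

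\textbf{Computing $d_T(\mathcal C)$.} The six differences are $\pm\bm u$, $\pm\bm v$, $\pm(\bm u-\bm v)$ and $\pm(\bm u+\bm v)$. Here $\bm u-\bm v\equiv(\gamma-a,\,a+\gamma-q)$ and $\bm u+\bm v\equiv(a+\gamma,\,a-\gamma)$. Reducing each coordinate to its Lee representative, and using $2a\in\{q,q-1\}$, I expect the diagonal $\bm u+\bm v$ to be no shorter than the others for $\gamma$ in the admissible range. This would give
\[
d_T(\mathcal C)=\min\Bigl\{\sqrt{a^2+\gamma^2},\ \sqrt{(a+\gamma-q)^2+(\gamma-a)^2}\Bigr\}.
\]
This is exactly the objective in the statement, so it remains to choose $\gamma$. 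The first term increases in $\gamma$ and the second decreases on the relevant range. The real crossing point $\gamma_0$ solves $a^2+\gamma^2=(q-a-\gamma)^2+(a-\gamma)^2$, i.e.\ $\gamma^2-2q\gamma+(q-a)^2=0$, so $\gamma_0=q-\sqrt{q^2-(q-a)^2}$. Over the integers the maximin is therefore attained at $\lfloor\gamma_0\rfloor$ or $\lfloor\gamma_0\rfloor+1$, which justifies the $\arg\max$ in the statement.

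\textbf{Upper bound: reduction.} Let $\mathcal C'=\{\bm 0,\bm x,\bm y,\bm z\}$ be any $4$-point code with $d_T(\mathcal C')^2=\delta$; translation lets us put one point at $\bm 0$. Lift each difference to its Lee representative in $[-a,a]^2$. The aim is to show $\delta\le d_T(\mathcal C)^2$.

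\textbf{Upper bound: main argument.} I would work with the four disks of radius $\sqrt\delta/2$ around the code points, packed in the torus $\mathbb R^2/q\mathbb Z^2$. First I would use the continuous statement: the optimal packing of $4$ equal disks in a square flat torus is the ``twisted square lattice'' configuration generated by two orthogonal vectors of equal length. Its optimal squared distance is exactly the crossing value $a^2+\gamma_0^2$ computed above (with the even/odd adjustment coming from $a=\lfloor q/2\rfloor$). This would give $\delta\le a^2+\gamma_0^2$, but a real bound alone does not suffice, since $\delta$ is an integer sum of two squares.

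So I would then do a discrete case analysis on the Lee representative $(s,t)$ of the shortest difference $\bm x$. Fixing $\bm x$, the points $\bm y$ and $\bm z$ must avoid the union of the two discs of squared radius $\delta$ around $\bm 0$ and $\bm x$. They must also be at squared distance at least $\delta$ from each other, which on a torus of side $q$ forces $\bm y-\bm z$ to be close to a half-period. Splitting according to whether $\bm y$ and $\bm z$ differ from $\bm 0$ and $\bm x$ by near-half-periods then reduces the problem to inequalities of the form $s^2+t^2\le a^2+\gamma^2$ with $|s|,|t|\le a$. The integrality of the coordinates then caps $\delta$ at the integer maximin value.

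\textbf{Expected main obstacle.} The hard step is this discrete optimality argument. A real-variable packing bound gives only $a^2+\gamma_0^2$, and I must show that no non-parallelogram integer configuration achieves a squared distance between the integer optimum and that real bound. If the case analysis becomes unwieldy, my first fallback would be Theorem~\ref{thm:local-ball-plotkin}: choose $\rho$ so that $L_\rho=2$ and $4\rho$ equals the target value. My second fallback would be the Delsarte bound, Theorem~\ref{thm:delsarte-lp-toroidal}, for small $n=2$, where the LP is tiny. Neither is guaranteed to be tight for every $q$, so the geometric case analysis is the intended route.
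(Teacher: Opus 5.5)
You correctly evaluate $d_T(\mathcal C)$: for both parities $\bm u+\bm v$ and $\bm u-\bm v$ have the same Lee norm. Your optimization of $\gamma$ inside the family, with crossing point $\gamma_0=q-\sqrt{q^2-(q-a)^2}$, is also fine. But this shows only that the MTD value is \emph{at least} $d_T(\mathcal C)$. The theorem's content is that no $4$-point code in $\mathbb Z_q^2$ does better, and your proposal contains no argument for that. The real-variable step cannot supply it. The optimal $4$-disk packing of $\mathbb R^2/q\mathbb Z^2$ gives $\delta\le(2-\sqrt3)q^2$ for every $q$, and this equals $a^2+\gamma_0^2$ only when $q$ is even. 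There is no ``even/odd adjustment'': for odd $q$ the torus still has side $q$, and the bound is strictly larger than $a^2+\gamma_0^2$. For $q=7$ it gives $\delta\le 13$, and $13=3^2+2^2$ is an admissible squared toroidal norm, yet the optimum is $10$. Excluding $13$ requires a genuinely discrete fact: no four points of $\mathbb Z_7$ are pairwise at Lee distance $\ge 2$. The same happens for even $q$. For $q=14$ the real bound allows $52=6^2+4^2$, while the optimum is $50$. So integrality is not a final rounding step; it is where the proof has to happen.

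That discrete step is exactly the part you only describe. You claim that $\bm y-\bm z$ is forced to be near a half-period and that the case split reduces to inequalities $s^2+t^2\le a^2+\gamma^2$, but neither is justified. Nothing rules out a non-parallelogram integer configuration whose minimum lies strictly between the family value and the continuous bound. Nothing justifies normalizing the shortest difference to $(\gamma,a)$ either.

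The paper attacks this step with a feasible-region argument. With $\bm u=(\gamma,a)$, it considers the region outside the disks of radius $|\bm u|_T$ about $\bm 0$, $\bm u$ and their $q$-translates. It observes that this region splits into two pieces whose extreme points give the candidate pairs $\{(a,q-\gamma),(a+\gamma,a-\gamma)\}$ and $\{(a,\gamma),(a+\gamma,a+\gamma)\}$. It then discards the second pair via $(a+\gamma-q)^2-(a-\gamma)^2=(2a-q)(2\gamma-q)\ge 0$. You need an argument of that kind, carried out carefully, or a complete integer case analysis.

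Your fallbacks do not close the gap. For $q=8$, the best choice in Theorem~\ref{thm:local-ball-plotkin} is $\rho=5$, with $V_8(2,5)=21$ and $L_5=2$. This yields only $d_T^2\le 20$, whereas the optimum is $17$ and both $18=3^2+3^2$ and $20=4^2+2^2$ are attainable squared norms. You also have no tightness information for the $n=2$ Delsarte program.
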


\begin{proof}
	
	Assume that \(C=\{\bm 0,\bm u,\bm v,\bm w\}\) is an MTD codebook. By translation invariance, we may assume that the pair \(\bm 0\) and \(\bm u\) attains the minimum toroidal distance of the code.
Let
\[
d_T(C)=d_T(\bm 0,\bm u)=|\bm u|_T.
\]
Then every other pair of codewords must have toroidal distance at least \(|\bm u|_T\). We now describe the corresponding feasible region geometrically.

  \begin{enumerate}[label=(\arabic*)]
  	\item We first consider the case $q$ even.
  		Identify \(\mathbb Z_q^2\) with the set of integer points in the square \([0,q)\times[0,q)\), and divide this square into four subsquares of side length \(\frac{q}{2}\). Since we want \(|\bm u|_T\) to be as large as possible while keeping the feasible region as large as possible, we write
  	\[
  	\bm u=(\gamma,a), \qquad a=\frac q2,\qquad 0\le \gamma<a,
  	\]
  	where the parameter \(\gamma\) is chosen to maximize \(|\bm u|_T\).
  	
  	In each subsquare, the toroidal distance from a point to \(\bm 0\) is the Euclidean distance to the corresponding corner among
  	\[
  	(0,0),\ (0,q),\ (q,0),\ (q,q).
  	\]
  	Now draw the circle centered at \(\bm u\) with radius \(|\bm u|_T\). Any other codeword must lie outside this circle. Also draw the circles of the same radius centered at
  	\[
  	(0,0),\ (0,q),\ (q,0),\ (q,q),\\ 
  	\bm u+(0,q) , \bm u+(q,0), \bm u+(q,q).
  	\]
  	Hence every admissible position of another codeword must lie inside the square \([0,q)\times[0,q)\) and outside all these circles. This determines the feasible region. In order to place two further codewords in this region, its toroidal diameter must be at least \(|\bm u|_T\).
  	
  	We begin with \(\gamma=0\), and then vary \(\gamma\) in an attempt to enlarge \(|\bm u|_T\). Suppose that for some \(\gamma^*\), the toroidal diameter of the feasible region is exactly \(|\bm u|_T\). We then take \(\gamma=\gamma^*\). A direct computation shows that when \(\gamma>0\), the feasible region splits into an upper part and a lower part. The upper part has extreme points
  	\[
  	\left(\frac q2,q-\gamma\right), \qquad \left(\frac q2+\gamma,\frac q2+\gamma\right),
  	\]
  	while the lower part has extreme points
  	\[
  	\left(\frac q2+\gamma,\frac q2-\gamma\right), \qquad \left(\frac q2,\gamma\right).
  	\]
  	Moreover,
  	\[
  	d_T\!\left(\left(\frac q2,q-\gamma\right),\left(\frac q2+\gamma,\frac q2-\gamma\right)\right)=|\bm u|_T,
  	\]
  	and
  	\[
  	d_T\!\left(\left(\frac q2+\gamma,\frac q2+\gamma\right),\left(\frac q2,\gamma\right)\right)=|\bm u|_T.
  	\]
  	Thus each of these pairs realizes the diameter of the feasible region. Therefore, we may choose \(\bm v\) and \(\bm w\) to be one of these pairs.
  	
  	This leads to the following two candidate codebooks:
  	\[
  	\bm{u} = (\gamma,a), \quad \bm{v} = (a,q-\gamma), \quad 
  	\mathcal{C} = \{(0,0),(\gamma,a),(a,q-\gamma),(a+\gamma,a-\gamma)\},
  	\]
  	or
  	\[
  	\bm{u}' = (\gamma,a), \quad \bm{v}' = (a,\gamma), \quad 
  	\mathcal{C}' = \{(0,0),(\gamma,a),(a,\gamma),(a+\gamma,a+\gamma)\}.
  	\]
  	
  	It remains to compare their minimum toroidal distances. We have
  	\[
  	d_T(\mathcal C)
  	=\min\!\left\{\sqrt{a^2+\gamma^2},\ \sqrt{(\gamma-a)^2+(a+\gamma-q)^2}\right\},
  	\]
  	and
  	\[
  	d_T(\mathcal C')
  	=\min\!\left\{\sqrt{a^2+\gamma^2},\ \sqrt{(\gamma-a)^2+(a-\gamma)^2}\right\}.
  	\]
  	Therefore it suffices to compare \((a+\gamma-q)^2\) with \((a-\gamma)^2\). Since
  	\[
  	(a+\gamma-q)^2-(a-\gamma)^2=(2a-q)(2\gamma-q)\ge 0,
  	\]
  	we obtain
  	\[
  	d_T(\mathcal C)\ge d_T(\mathcal C').
  	\]
  	Hence a better choice is
  	\[
  	\mathcal{C}=\{(0,0),(\gamma,a),(a,q-\gamma),(a+\gamma,a-\gamma)\}.
  	\]

  	We know $d_T(\mathcal{C})=\min(\sqrt{a^2+\gamma^2},\sqrt{(a+\gamma-q)^2+(\gamma-a)^2})$. In the case $\sqrt{a^2+\gamma^2}=\sqrt{(a+\gamma-q)^2+(\gamma-a)^2}$, we obtain $\gamma=q-\sqrt{q^2-(a-q)^2}=(1-\frac{\sqrt{3}}{2})q$. Since $\gamma$ is an integer, we have 
  	\[
  	\gamma^* \in
  	\operatorname*{arg\,max}_{\gamma\in
  		\left\{
  		\left\lfloor\left(1-\frac{\sqrt{3}}{2}\right)q\right\rfloor,\,
  		\left\lfloor\left(1-\frac{\sqrt{3}}{2}\right)q\right\rfloor+1
  		\right\}}
  	\min\left\{
  	\sqrt{a^2+\gamma^2},
  	\sqrt{(a+\gamma-q)^2+(\gamma-a)^2}
  	\right\}.
  	\]
  	
  	\item 	The proof of case $q$ odd is analogous to the even case, so we only indicate the difference. By translation invariance, we may assume that
  	\[
  	d_T(\mathcal C)=d_T(0,\bm u)=|\bm u|_T,
  	\qquad
  	\bm u=(\gamma,a),\quad a=\frac{q-1}{2},\quad 0\le \gamma\le a.
  	\]
  	As in the even case, every other codeword must lie in the feasible region determined by the condition that all the toroidal distances from \(0\), \(\bm u\), and the relevant torus translates of these points are at least \(|\bm u|_T\).
  	
  	Compared with the even case, the only geometric difference is that the
  	middle lines \(x=q/2\) and \(y=q/2\) are replaced by two adjacent integer
  	lines in each coordinate direction:
  	\[
  	x=\frac{q-1}{2},\quad x=\frac{q+1}{2};
  	\qquad
  	y=\frac{q-1}{2},\quad y=\frac{q+1}{2},
  	\]
  	which produce one vertical strip and one
  	horizontal strip of width \(1\); see Figure~\ref{fig:oddfeasible-region}.  Since these strips contain no integer
  	points, they cannot be part of the feasible region.
  	However, this does not change the extremal configuration: the diameter is still attained by boundary points in the same relative position as in the even case. Hence the remaining two codewords may again be chosen as
  	\[
  	(\frac{q-1}{2},q-\gamma)
  	\qquad\text{and}\qquad
  	(\frac{q-1}{2}+\gamma,\frac{q-1}{2}-\gamma).
  	\]
  	Therefore an MTD code may be taken in the stated form. 
  	
  	When $q=3$, it is worth noting that the feasible region can also be attained on the $y$-axis, as shown in Figure~\ref{fig:q3feasible-region}. Hence $\bm v$ and $\bm w$ may also be taken as $(0,2)$ and $(1,2)$.

  	The optimal $\gamma$ can also be found in a similar way.
  	$$	\gamma^*\in \operatorname*{arg\,max}_{\gamma\in
  		\left\{
  		\left\lfloor q-\sqrt{q^2-(a-q)^2}\right\rfloor,\,
  		\left\lfloor q-\sqrt{q^2-(a-q)^2}\right\rfloor+1
  		\right\}} \min(\sqrt{a^2+\gamma^2},\sqrt{(a+\gamma-q)^2+(\gamma-a)^2}).$$
  \end{enumerate}

	\end{proof}

\begin{center}
	\begin{minipage}[t]{0.45\textwidth}
		\centering
		\includegraphics[width=\textwidth]{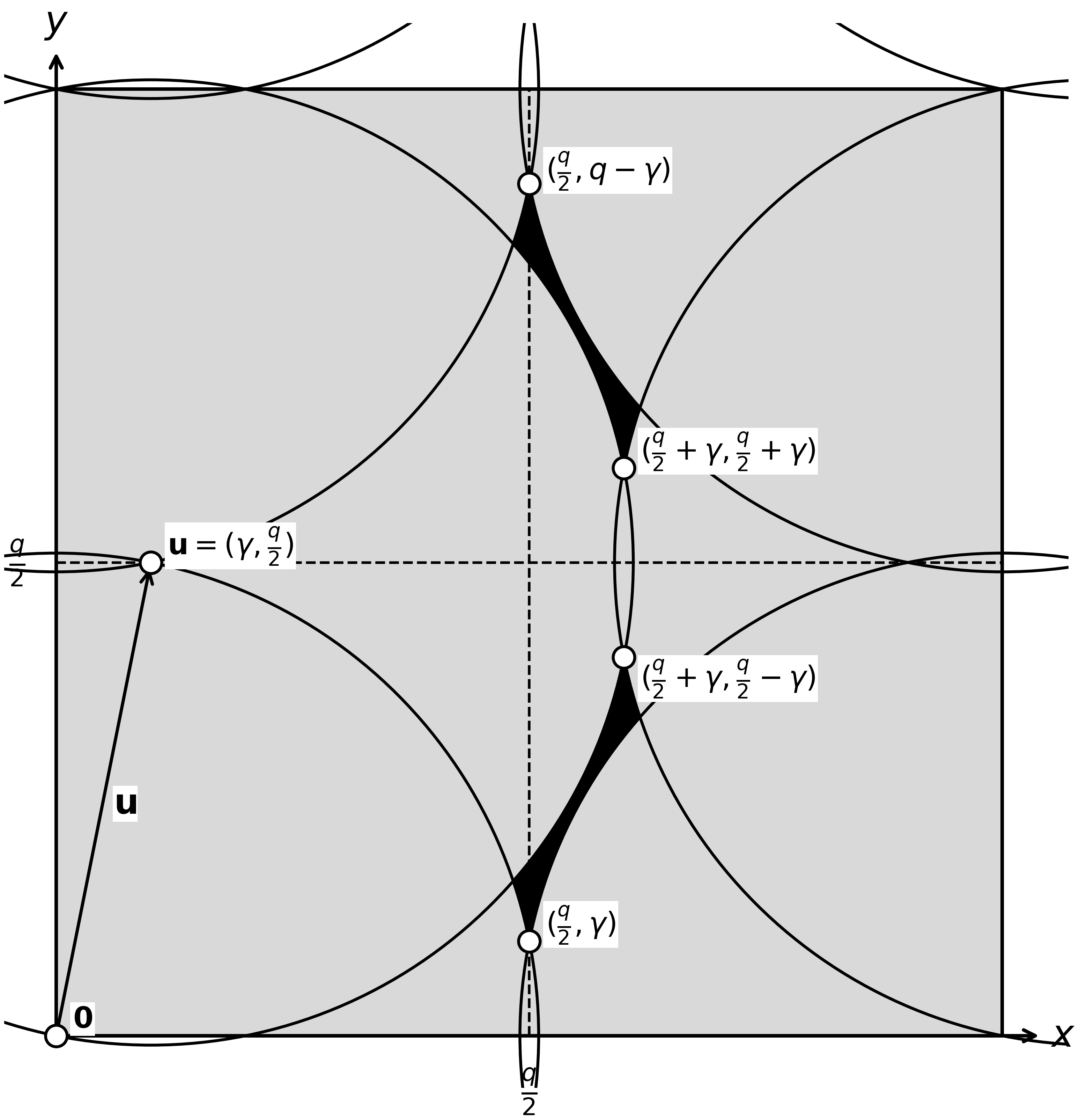}
		\captionof{figure}{Feasible region of codewords for even $q$.}
		\label{fig:feasible-region}
	\end{minipage}
	\hfill
	\begin{minipage}[t]{0.45\textwidth}
		\centering
		\includegraphics[width=\textwidth]{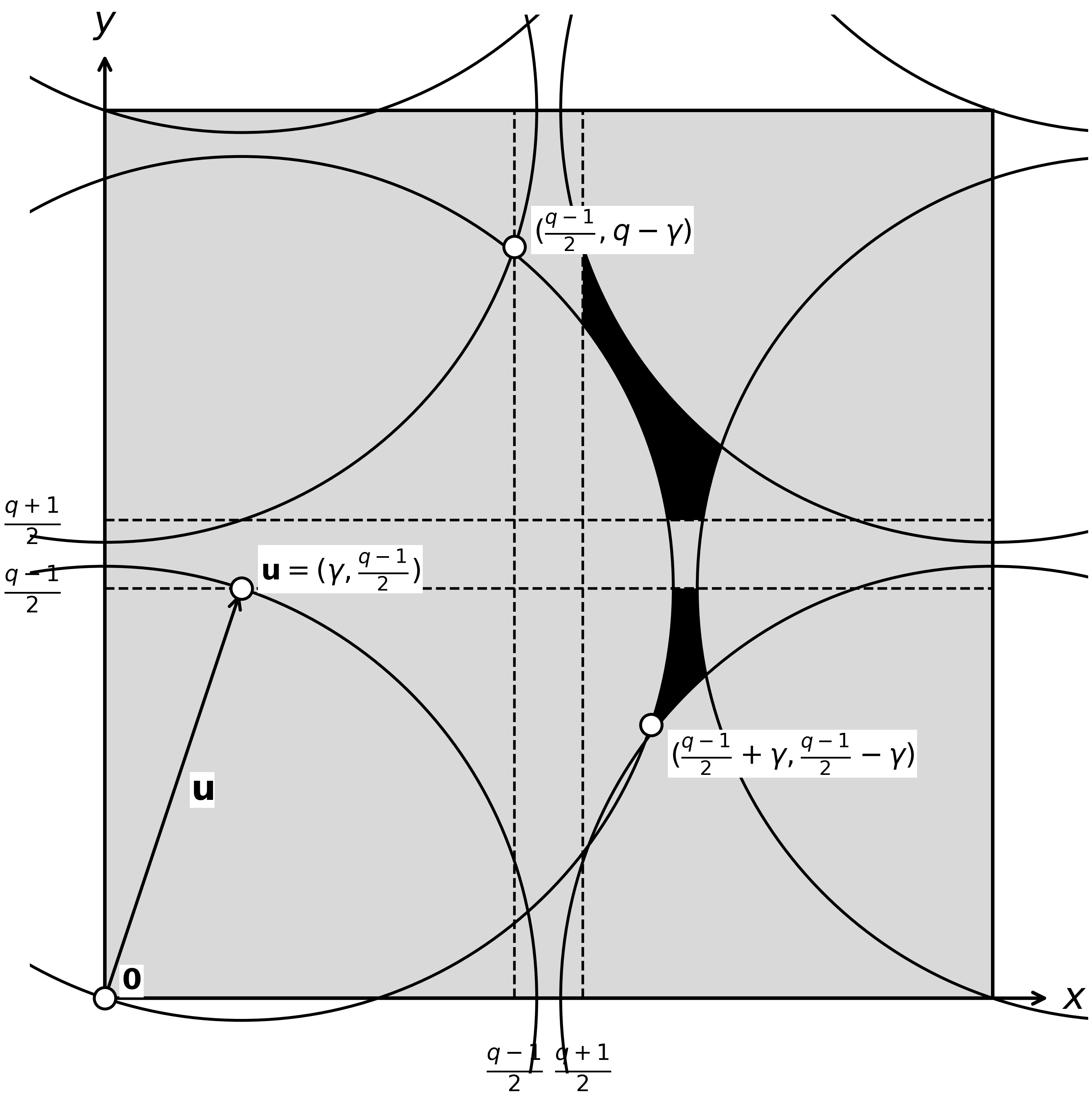}
		\captionof{figure}{Feasible region of codewords for odd $q$.}
		\label{fig:oddfeasible-region}
	\end{minipage}
\end{center}
	
\begin{figure}[!htb]
	\centering
	\includegraphics[width=0.6\textwidth]{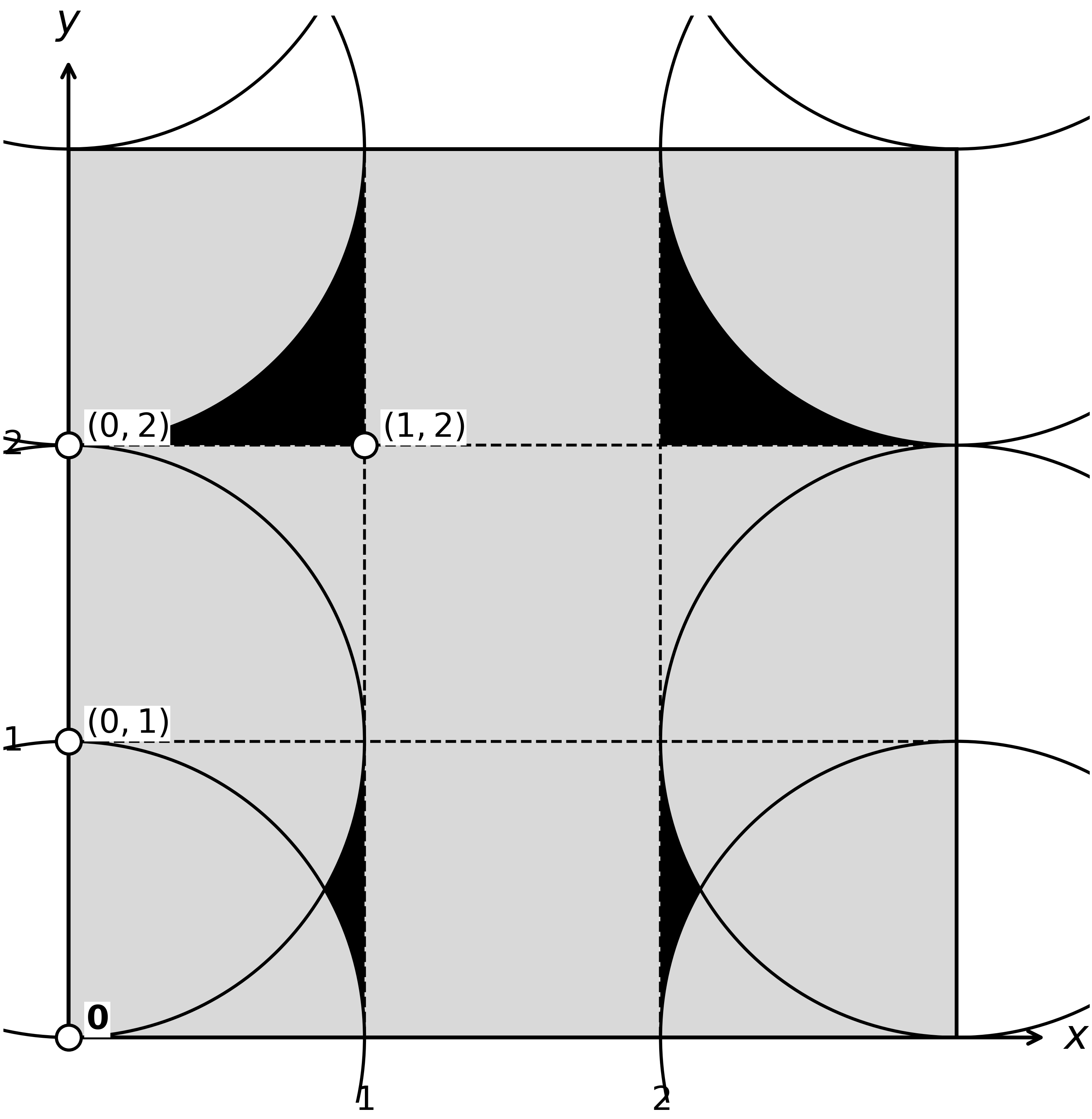}
	\caption{Feasible region of codewords for odd \(q=3\).}
	\label{fig:q3feasible-region}
\end{figure}

\subsection{$\ell=4$}

For the case $\ell=4$, we generalize the construction from a generator
matrix of MTD code for $\ell=2$. This yields a rather strong construction.
For several special $q$, we find that the resulting codes are all
MTD codes, which provides indirect evidence that this family is a strong
candidate for MTD constructions.

\begin{theorem}
		\label{l441}
Let \(q=2a\) be even, and let
\[
G
=
\begin{pmatrix}
	a & \gamma & \gamma & \gamma\\
	2a-\gamma & a & \gamma & 2a-\gamma\\
	2a-\gamma & 2a-\gamma & a & \gamma\\
	2a-\gamma & \gamma & 2a-\gamma & a
\end{pmatrix},
\]
where \(\gamma\in\{0,1,\ldots,a\}\). We consider the \(16\)-point code
\[
C
=
\{\bm m G\bmod q:\ \bm m\in\{0,1\}^4\}
\subseteq \mathbb Z_q^4.
\]
Set
	\[
\begin{aligned}
	f_1&=a^2+3\gamma^2;\\
	f_2&=
	\begin{cases}
		2(a^2-2a\gamma+3\gamma^2), & 0\le\gamma\le a/2,\\
		6(a-\gamma)^2, & a/2\le\gamma\le a;
	\end{cases}\\
	f_3&=3a^2-8a\gamma+9\gamma^2;\\
	f_4&=
	\begin{cases}
		3(a^2+3\gamma^2), & 0\le\gamma\le a/3,\\
		7a^2-12a\gamma+9\gamma^2, & a/3\le\gamma\le a;
	\end{cases}\\
	f_5&=4(a^2-3a\gamma+3\gamma^2).
\end{aligned}
\]
Then
\[
d_T(C)=\sqrt{\min\{f_1,f_2,f_3,f_4,f_5\}}.
\]
Consequently, within this family, the best choice of \(\gamma\) is any
\(\gamma^*\) satisfying
\[
\gamma^*
\in
\operatorname*{arg\,max}_{\substack{0\le \gamma\le a\\ \gamma\in\mathbb Z}}
\min\{f_1,f_2,f_3,f_4,f_5\}.
\]

\end{theorem}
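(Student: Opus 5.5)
The plan is to reduce the minimum distance of this nonlinear code to a finite minimum over difference vectors. For $\bm m,\bm m'\in\{0,1\}^4$ we have $\bm mG-\bm m'G=\bm eG$ with $\bm e=\bm m-\bm m'\in\{-1,0,1\}^4$. Every nonzero $\bm e\in\{-1,0,1\}^4$ arises this way: put $m_i=1$, $m'_i=0$ where $e_i=1$, the reverse where $e_i=-1$, and zeros elsewhere. By translation invariance of $d_T$,
\[
d_T(C)^2=\min_{\bm 0\neq \bm e\in\{-1,0,1\}^4}|\bm eG \bmod q|_T^2 .
\]
Since $|\bm x|_T=|-\bm x|_T$, only $(3^4-1)/2=40$ sign patterns need to be checked. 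This argument also implicitly shows $|C|=16$ whenever the minimum is positive.

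Next I would compute $|\bm eG|_T^2$ for each pattern. Each coordinate of $\bm eG$ is a signed sum of at most four entries from $\{a,\gamma,2a-\gamma\}$. Modulo $q=2a$ we have $2a-\gamma\equiv-\gamma$, so each coordinate reduces to $\epsilon a+k\gamma \pmod{2a}$ with $\epsilon\in\{0,1\}$ and $|k|\le 3$. For $0\le\gamma\le a$ the Lee values of these residues are simple piecewise-linear functions of $\gamma$. For instance $d_L(\pm\gamma)=\gamma$, $d_L(a\pm\gamma)=a-\gamma$, $d_L(2\gamma)=\min\{2\gamma,2a-2\gamma\}$, with a switch at $\gamma=a/2$, and $d_L(a\pm 3\gamma)$ and $d_L(3\gamma)$ have switches at $\gamma=a/3$ and $\gamma=2a/3$. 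These switch points are the source of the case splits in $f_2$ and $f_4$.

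To cut down the work I would use the structure of $G$. The lower $3\times3$ block is circulant-like: cyclically permuting rows $2,3,4$ together with columns $2,3,4$, possibly combined with the negation $\gamma\leftrightarrow 2a-\gamma$, should preserve the multiset of row norms. I would group the 40 patterns into orbits under these symmetries, grouped first by support size $1,2,3,4$ and then by sign. As a check on the first case, every single row gives $a^2+3\gamma^2=f_1$; for row $2$, for example, the Lee entries are $(\gamma,a,\gamma,\gamma)$.

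For each orbit I would write down the squared norm as an explicit piecewise quadratic in $\gamma$. I expect the pairwise differences to produce $f_2$ and $f_3$, the triples to produce $f_4$, and the all-ones or alternating patterns to produce $f_5$. Two things then remain. The first is attainment: each $f_i$ is exactly the norm of some specific $\bm e$, so $d_T(C)^2\le\min_i f_i$. The second is dominance: every other orbit value is at least one of $f_1,\dots,f_5$ at every $\gamma\in[0,a]$. Dominance reduces to checking the sign of quadratics such as $g-f_j$ on the subintervals cut out by the breakpoints $a/3$, $a/2$ and $2a/3$, after rescaling $t=\gamma/a\in[0,1]$. The statement about $\gamma^*$ then follows immediately by maximising over integers.

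The main obstacle is this dominance bookkeeping. There are many piecewise expressions, and a single one that drops below $\min_i f_i$ on some subinterval would break the formula. My first move would be to tabulate all 40 values symbolically in $t$, using the symmetry reduction to get roughly a dozen distinct functions. For each non-listed function I would exhibit one specific $f_j$ with $g-f_j\ge 0$ on each subinterval, shown by completing the square or by checking the endpoints together with the sign of the leading coefficient.
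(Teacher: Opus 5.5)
Your proposal is correct and is essentially the paper's argument: reduce to differences $\bm e\in\{-1,0,1\}^4$, use the symmetry of $G$, and evaluate $|\bm eG|_T^2$. In fact the $40$ patterns (up to sign) produce exactly five functions, so you can drop your guess that two-entry patterns give $f_3$. All twelve patterns with two nonzero entries give $f_2$, those with three nonzero entries give $f_3$ (twelve) or $f_4$ (four), and all eight full-support patterns share a common value.

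Your attainment step needs one correction. Because $d_L(a\pm3\gamma,0)$ switches at $2a/3$, that common full-support value is $12(a-\gamma)^2$ rather than $4(a^2-3a\gamma+3\gamma^2)$ when $\gamma>2a/3$. The paper's proof makes the same slip for $(1,1,1,1)$. This does not change the minimum, since $f_2=6(a-\gamma)^2$ is smaller there. Also, at $\gamma=a$ every row is $(a,a,a,a)$, so $C$ has only two points and the formula holds only with $C$ read as a multiset.
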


\begin{proof}
	For distinct messages \(\bm m,\bm m'\in\{0,1\}^4\),  \( \bm m''=
	\bm m-\bm m'\) belongs to
	\[
	\Delta=\{-1,0,1\}^4\setminus\{\bm 0\}.
	\]
	For each \(\bm m''\in\Delta\), the squared toroidal distance
	between the corresponding pair of codewords is
	\[
	|\bm m'' G|_T^2.
	\]
	Although \(|\Delta|=3^4-1=80\), these differences fall into five distance
	types under the symmetries of the construction. Thus it suffices to compute
	\(|\bm m'' G|_T^2\) for the following representatives:
	\[
	(1,0,0,0),\quad
	(1,1,0,0),\quad
	(1,1,1,0),\quad
	(0,1,1,1),\quad
	(1,1,1,1).
	\]
	For these representatives, the corresponding squared toroidal distances are
	\[
	\begin{aligned}
		f_1&=a^2+3\gamma^2;\\
		f_2&=
		\begin{cases}
			2(a^2-2a\gamma+3\gamma^2), & 0\le\gamma\le a/2,\\
			6(a-\gamma)^2, & a/2\le\gamma\le a;
		\end{cases}\\
		f_3&=3a^2-8a\gamma+9\gamma^2;\\
		f_4&=
		\begin{cases}
			3(a^2+3\gamma^2), & 0\le\gamma\le a/3,\\
			7a^2-12a\gamma+9\gamma^2, & a/3\le\gamma\le a;
		\end{cases}\\
		f_5&=4(a^2-3a\gamma+3\gamma^2).
	\end{aligned}
	\]
	Hence
	\[
	d_T(C)^2=\min\{f_1,f_2,f_3,f_4,f_5\}.
	\]

\end{proof}

\begin{corollary}
	\label{l4444}
	Let \(q=2a\) with \(3\mid a\), and let \(C\) be the code in
	Theorem~\ref{l441} with \(\gamma=a/3\). Then
	\[
	d_T(C)=\frac{q}{\sqrt{3}}.
	\]
\end{corollary}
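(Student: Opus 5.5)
The plan is to substitute $\gamma=a/3$ directly into the five functions $f_1,\dots,f_5$ of Theorem~\ref{l441} and take the minimum. Since $3\mid a$, $\gamma=a/3$ is an integer in $\{0,1,\ldots,a\}$, so the theorem applies. Moreover, $\gamma=a/3$ lies in the range $0\le\gamma\le a/2$ for $f_2$. It is also exactly the breakpoint of $f_4$, so either branch may be used; they agree there, which serves as a consistency check.

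The computations are as follows:
\[
f_1=a^2+\tfrac{a^2}{3}=\tfrac{4a^2}{3},\qquad
f_2=2\Bigl(a^2-\tfrac{2a^2}{3}+\tfrac{a^2}{3}\Bigr)=\tfrac{4a^2}{3},\qquad
f_3=3a^2-\tfrac{8a^2}{3}+a^2=\tfrac{4a^2}{3},
\]
\[
f_4=3\Bigl(a^2+\tfrac{a^2}{3}\Bigr)=4a^2,\qquad
f_5=4\Bigl(a^2-a^2+\tfrac{a^2}{3}\Bigr)=\tfrac{4a^2}{3}.
\]
The minimum is therefore $4a^2/3$. By Theorem~\ref{l441}, $d_T(C)=\sqrt{4a^2/3}=2a/\sqrt3=q/\sqrt3$.

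There is no real obstacle here. The only care needed is choosing the correct branch of the piecewise definitions of $f_2$ and $f_4$, and noting that $f_4$ is continuous at $\gamma=a/3$, since $7a^2-4a^2+a^2=4a^2$. Everything else is routine evaluation of the formulas already established in Theorem~\ref{l441}.
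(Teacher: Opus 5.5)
Your proof is correct and matches the decisive step of the paper's proof. Evaluating the formulas of Theorem~\ref{l441} at $\gamma=a/3$ (first branch of $f_2$, either branch of $f_4$) gives $f_1=f_2=f_3=f_5=4a^2/3$ and $f_4=4a^2$, hence $d_T(C)=2a/\sqrt3=q/\sqrt3$. The paper additionally proves $\min_i f_i(\gamma)\le 4a^2/3$ for every admissible $\gamma$, using $f_1$ on $[0,a/3]$, $f_5$ on $[a/3,2a/3]$, and the second branch of $f_2$ on $[2a/3,a]$. That shows $\gamma=a/3$ is the optimal $\gamma^*$ within the family, but it is not needed for the equality as stated, so your direct evaluation suffices.
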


\begin{proof}
	By Theorem~\ref{l441}, the squared minimum toroidal distance is given by
	\[
	d_T(C)^2
	=
	\max_{\gamma\in\{0,1,\ldots,a\}}
	\min\{f_1(\gamma),f_2(\gamma),f_3(\gamma),f_4(\gamma),f_5(\gamma)\},
	\]
	where $f_i$, $i=1,\ldots,5$, are the functions defined in
	Theorem~\ref{l441}. We show that the maximum is attained at
	$\gamma=a/3$.
	
	Put
	\[
	H(\gamma)=\min\{f_1(\gamma),f_2(\gamma),f_3(\gamma),f_4(\gamma),f_5(\gamma)\}.
	\]
	For $0\le \gamma\le a/3$, we have
	\[
	H(\gamma)\le f_1(\gamma)=a^2+3\gamma^2\le \frac{4a^2}{3}.
	\]
	For $a/3\le \gamma\le 2a/3$,
	\[
	H(\gamma)\le f_5(\gamma)=4(a^2-3a\gamma+3\gamma^2).
	\]
	Moreover,
	\[
	f_5(\gamma)-\frac{4a^2}{3}
	=
	12\left(\gamma-\frac a3\right)
	\left(\gamma-\frac{2a}{3}\right)\le 0,
	\]
	and therefore $H(\gamma)\le 4a^2/3$ for $a/3\le \gamma\le 2a/3$. Finally, for
	$2a/3\le \gamma\le a$, since $\gamma\ge a/2$, the second branch of
	$f_2$ gives
	\[
	H(\gamma)\le f_2(\gamma)=6(a-\gamma)^2
	\le 6\left(\frac a3\right)^2
	=
	\frac{2a^2}{3}.
	\]
	Hence, for every admissible integer $\gamma$,
	\[
	H(\gamma)\le \frac{4a^2}{3}.
	\]
	
	Since $3\mid a$, $\gamma=a/3$ is an admissible integer. Therefore
	\[
	f_1=f_2=f_3=f_5=\frac{4a^2}{3},
	\qquad
	f_4=4a^2.
	\]
	Thus
	\[
	\max_{\gamma\in\{0,1,\ldots,a\}}H(\gamma)=\frac{4a^2}{3},
	\]
	which implies
	\[
	d_T(C)^2=\frac{4a^2}{3}.
	\]
	Hence
	\[
	d_T(C)=\frac{2a}{\sqrt{3}}=\frac{q}{\sqrt{3}}.
	\]
\end{proof}

We give examples for $q=4,6,12$ and prove that they are all MTD codes.

\begin{example}
	Let $q=4$. Then $a=2$. Taking $\gamma=1$, the matrix in
	Theorem~\ref{l441} becomes
	\[
	G=
	\begin{pmatrix}
		2&1&1&1\\
		3&2&1&3\\
		3&3&2&1\\
		3&1&3&2
	\end{pmatrix}.
	\]
	Define
	\[
	C=\{\bm mG \bmod 4:\bm m\in\{0,1\}^4\}\subseteq \mathbb Z_4^4.
	\]
	By Theorem~\ref{l441}, we have
	\[
	d_T(C)^2=\min\{f_1,f_2,f_3,f_4,f_5\}.
	\]
	Substituting $a=2$ and $\gamma=1$ gives
	\[
	f_1=7,\qquad
	f_2=6,\qquad
	f_3=5,\qquad
	f_4=13,\qquad
	f_5=4.
	\]
	Hence
	\[
	d_T(C)=2.
	\]
  Table \ref{tabl1} shows that every \(16\)-point
	code in \(\mathbb Z_4^4\) satisfies
	\[
	d_T\le 2.
	\]
	Since \(d_T(C)=2\), the code \(C\) attains this upper bound, which implies \(C\) is an MTD code.
\end{example}

\begin{example}
	For \(q=6\), we have \(a=3\) and \(\gamma=1\). Let
	\[
	G
	=
	\begin{pmatrix}
		3 & 1 & 1 & 1\\
		5 & 3 & 1 & 5\\
		5 & 5 & 3 & 1\\
		5 & 1 & 5 & 3
	\end{pmatrix},
	\]
	and define
	\[
	C:=\{\bm mG \bmod 6:\bm m\in\{0,1\}^4\}
	\subseteq\mathbb Z_6^4.
	\]
	By Corollary~\ref{l4444}, we have
	\[
	d_T(C)=2\sqrt{3}.
	\]
	Moreover, the Delsarte linear programming bound
	(Theorem~\ref{thm:delsarte-lp-toroidal}) shows that every \(16\)-point
	code in \(\mathbb Z_6^4\) satisfies
	\[
	d_T^2\le 12.
	\]
	Since \(d_T(C)^2=12\), the code \(C\) attains this upper bound, which implies that
	\(C\) is an MTD code.
\end{example}

\begin{example}
	For \(q=12\), we have \(a=6\) and \(\gamma=2\). Let
	\[
	G
	=
	\begin{pmatrix}
		6 & 2 & 2 & 2\\
		10 & 6 & 2 & 10\\
		10 & 10 & 6 & 2\\
		10 & 2 & 10 & 6
	\end{pmatrix},
	\]
	and define
	\[
	C:=\{\bm mG\bmod 12:\bm m\in\{0,1\}^4\}
	\subseteq\mathbb Z_{12}^4.
	\]
	By Corollary~\ref{l4444}, we have
	\[
	d_T(C)=4\sqrt{3}.
	\]
	Moreover, the Delsarte linear programming bound
(Theorem~\ref{thm:delsarte-lp-toroidal}) shows that every \(16\)-point
code in \(\mathbb Z_{12}^4\) satisfies
	\[
	d_T^2\le 48.
	\]
	Since \(d_T(C)^2=48\), the code \(C\) attains this upper bound.
	Therefore \(C\) is an MTD code.
\end{example}

	\subsection{$\ell=8$}

Different sources may use different conventions for the determinant of a lattice $\Lambda$. Here, the determinant of $\Lambda$ is defined as the covolume of
$\Lambda$.
\begin{definition}[Determinant of a lattice]
	\label{def:lattice-determinant}
	Let $\Lambda\subset\mathbb R^\ell$ be a full-rank lattice, and let
	$B=(\bm b_1,\ldots,\bm b_\ell)$ be a basis matrix of $\Lambda$, so that
	\[
	\Lambda=B\mathbb Z^\ell.
	\]
The determinant of $\Lambda$ is defined as the volume of a fundamental
region of $\Lambda$, namely
	\[
	\det(\Lambda):=\operatorname{vol}(\mathbb R^\ell/\Lambda)=|\det B|.
	\]
\end{definition}

\begin{lemma}[{\cite[Chap.~1, Sec.~1.2; Chap.~8, Sec.~3]{ConwaySloane1999}}]
	\label{lmlattie1}
	Let \(\Lambda\subset \mathbb R^\ell\) be a full-rank lattice such that
	\(
	p\mathbb Z^\ell\subseteq \Lambda\subseteq \mathbb Z^\ell .
	\)
	Then the number of points in \(\Lambda \cap \mathbb Z_p^\ell\) is
	\[
	N=\left|\Lambda \cap \mathbb Z_p^\ell\right|=\frac{p^\ell}{\det(\Lambda)},
	\]
	where \(\det(\Lambda)\) denotes the determinant of \(\Lambda\).
\end{lemma}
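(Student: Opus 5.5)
The plan is to identify $\Lambda\cap\mathbb Z_p^\ell$ with the quotient group $\Lambda/p\mathbb Z^\ell$ and then count cosets using indices of lattices.

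\emph{Identification step.} Interpret $\Lambda\cap\mathbb Z_p^\ell$ as the image of $\Lambda$ under the reduction map $\pi:\mathbb Z^\ell\to\mathbb Z_p^\ell$. Equivalently, it is the set of points of $\Lambda$ whose coordinates all lie in $\{0,1,\dots,p-1\}$, which is the representative set the paper uses for $\mathbb Z_p$. Since $p\mathbb Z^\ell\subseteq\Lambda$, the kernel of $\pi|_\Lambda$ is $\Lambda\cap p\mathbb Z^\ell=p\mathbb Z^\ell$, so $\pi(\Lambda)\cong\Lambda/p\mathbb Z^\ell$. Moreover, for each $\bm x\in\Lambda$ the reduced vector $\bm x \bmod p$ differs from $\bm x$ by an element of $p\mathbb Z^\ell\subseteq\Lambda$. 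So the reduced vector again lies in $\Lambda$, and the two descriptions of the set agree. Hence
\[
N=[\Lambda:p\mathbb Z^\ell].
\]

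\emph{Index step.} Use multiplicativity of the index in the chain $p\mathbb Z^\ell\subseteq\Lambda\subseteq\mathbb Z^\ell$:
\[
[\mathbb Z^\ell:p\mathbb Z^\ell]=[\mathbb Z^\ell:\Lambda]\,[\Lambda:p\mathbb Z^\ell],
\qquad
[\mathbb Z^\ell:p\mathbb Z^\ell]=p^\ell .
\]
It then remains to show $[\mathbb Z^\ell:\Lambda]=\det(\Lambda)$ in the covolume sense of Definition~\ref{def:lattice-determinant}. This is the only step needing justification. I would use the standard fact that for full-rank lattices $\Lambda'\subseteq\Lambda$ one has $\operatorname{vol}(\mathbb R^\ell/\Lambda')=[\Lambda:\Lambda']\operatorname{vol}(\mathbb R^\ell/\Lambda)$. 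The reason is that a fundamental region of $\Lambda'$ is tiled, up to measure zero, by $[\Lambda:\Lambda']$ translates of a fundamental region of $\Lambda$, one translate for each coset representative.

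A more concrete route is via Smith normal form. Write a basis matrix $B$ of $\Lambda$ as $B=U\,\mathrm{diag}(d_1,\dots,d_\ell)\,V$, where $U,V$ are unimodular integer matrices; $B$ is an integer matrix since $\Lambda\subseteq\mathbb Z^\ell$. Then $\mathbb Z^\ell/\Lambda\cong\bigoplus_i\mathbb Z/d_i\mathbb Z$, which has order $\prod d_i=|\det B|=\det(\Lambda)$. Applying this with $\mathbb Z^\ell$ as the ambient lattice, whose covolume is $1$, gives $[\mathbb Z^\ell:\Lambda]=\det(\Lambda)$.

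\emph{Conclusion.} Combining the two steps,
\[
N=\frac{p^\ell}{[\mathbb Z^\ell:\Lambda]}=\frac{p^\ell}{\det(\Lambda)}.
\]
The main obstacle is the identification of index with covolume, and Smith normal form handles it cleanly. Everything else is bookkeeping about reduction modulo $p$.
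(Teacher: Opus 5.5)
Your proof is correct. The paper gives no argument for this lemma; it only cites Conway--Sloane, so your write-up is a self-contained proof rather than a variant of one in the paper.

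Your argument runs in two steps. First, you identify $\Lambda\cap\mathbb Z_p^\ell$ with $\Lambda/p\mathbb Z^\ell$. This uses the hypothesis $p\mathbb Z^\ell\subseteq\Lambda$ exactly where it is needed: to guarantee that reducing a lattice point modulo $p$ stays inside $\Lambda$. Second, you get $[\mathbb Z^\ell:\Lambda]=|\det B|$ from the Smith normal form, and multiplicativity of the index in $p\mathbb Z^\ell\subseteq\Lambda\subseteq\mathbb Z^\ell$ then gives $N=p^\ell/[\mathbb Z^\ell:\Lambda]$.

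This buys two things over the bare citation. First, it works directly with the covolume convention the paper fixes for $\det(\Lambda)$. In Conway--Sloane the determinant is the Gram determinant, i.e.\ the square of the covolume, so the cited formula only reads as stated after translating conventions; the paper itself flags that conventions differ. Second, it makes plain that $p$ need not be prime. That matters because the paper later applies the lemma with $p=q=4m$ and $\Lambda=2mE_8$.
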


	\begin{lemma}[{\cite[Lemma~2]{LiuSakzad2026}}]
		\label{lmlattie2}
		Let $\Lambda \subseteq \mathbb Z^\ell$ be an integer lattice containing the sublattice $p\mathbb Z^\ell$. Then its minimum Euclidean distance $d_{\min}(\Lambda)$ satisfies
		\[
		d_{\min}(\Lambda)
		=\min_{\substack{\bm v_1,\bm v_2\in \Lambda\\ \bm v_1\ne \bm v_2}}
		\|\bm v_1-\bm v_2\|
		= d_T\!\left(\Lambda \cap \mathbb Z_p^\ell\right).
		\]
	\end{lemma}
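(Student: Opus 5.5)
The statement to prove is Lemma~\ref{lmlattie2}: for an integer lattice $\Lambda$ with $p\mathbb Z^\ell\subseteq\Lambda\subseteq\mathbb Z^\ell$, we have $d_{\min}(\Lambda)=d_T(\Lambda\cap\mathbb Z_p^\ell)$. The plan is to prove the two inequalities separately. Both directions rest on one observation: for $\bm x,\bm y\in\mathbb Z_p^\ell$, the squared toroidal distance $d_T(\bm x,\bm y)^2$ equals $\min_{\bm k\in\mathbb Z^\ell}\|\bm x-\bm y+p\bm k\|^2$. The minimum is taken coordinatewise, because each coordinate can be shifted independently and $d_L(x_i,y_i)=\min_{k\in\mathbb Z}|x_i-y_i+pk|$. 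Since $p\mathbb Z^\ell\subseteq\Lambda$, the set $C:=\Lambda\cap\mathbb Z_p^\ell$ is exactly the set of reductions mod $p$ of points of $\Lambda$. It is also closed under subtraction mod $p$, and $\Lambda$ is a union of cosets of $p\mathbb Z^\ell$.

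\textbf{Step 1: $d_T(C)\ge d_{\min}(\Lambda)$.} Take distinct $\bm x,\bm y\in C$. Choose $\bm k$ attaining the minimum above and set $\bm v=\bm x-\bm y+p\bm k$. Then $\bm v\in\Lambda$, because $\bm x,\bm y\in\Lambda$ (as integer representatives) and $p\bm k\in\Lambda$. Also $\bm v\neq\bm 0$, since $\bm x\not\equiv\bm y\pmod p$. Hence $d_T(\bm x,\bm y)=\|\bm v\|\ge d_{\min}(\Lambda)$. Here $d_{\min}(\Lambda)$ is the minimum norm of a nonzero lattice vector, which equals the minimum distance between distinct lattice points.

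\textbf{Step 2: $d_T(C)\le d_{\min}(\Lambda)$.} Let $\bm v\in\Lambda\setminus\{\bm 0\}$ have minimal norm. Reduce $\bm v$ to $\bar{\bm v}\in C$ with entries in $\{0,\dots,p-1\}$. By the observation, $|\bar{\bm v}|_T=d_T(\bar{\bm v},\bm 0)\le\|\bm v\|$, since $\bm v$ itself is one of the representatives in the minimum. The one point needing care is that $\bar{\bm v}\neq\bm 0$, i.e.\ that $\bm v\notin p\mathbb Z^\ell$. If $\bm v\in p\mathbb Z^\ell$, then $\|\bm v\|\ge p$. In that case I would instead use $\bm e=p\bm e_1\in\Lambda$. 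If $C$ has at least two points, pick any nonzero $\bm c\in C$ and lift it with entries in $(-p/2,p/2]$. Its norm is then $|\bm c|_T$, and this lift is a nonzero lattice vector. So $d_{\min}(\Lambda)\le|\bm c|_T$. Moreover each $|c_i|\le p/2$, so if $\bm c$ has a single nonzero entry then $|\bm c|_T\le p/2<p$. This shows the minimum is not attained only on $p\mathbb Z^\ell$ once one sees that $C$ contains a vector of toroidal norm less than $p$. That need not hold in general (e.g.\ a vector $(1,\dots,1)$ in large dimension), so the cleaner route is the following. Suppose every minimal vector lies in $p\mathbb Z^\ell$, so $d_{\min}(\Lambda)=p$ (since $p\bm e_1$ is the shortest vector of $p\mathbb Z^\ell$). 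Then every lift of a nonzero $\bm c\in C$ with entries in $(-p/2,p/2]$ has norm at least $p$, giving $d_T(C)\ge p$ by Step 1. Equality then requires exhibiting a pair at toroidal distance exactly $p$, which need not exist. This degenerate case is the main obstacle.

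I would resolve it by checking the convention in \cite{LiuSakzad2026} and, if necessary, stating the lemma under the tacit hypothesis $\Lambda\ne p\mathbb Z^\ell$ with a minimal vector outside $p\mathbb Z^\ell$. This hypothesis holds in all applications here, such as $2mE_8\cap\mathbb Z_q^8$, where $d_{\min}=2m\sqrt2<4m=q$. In general one may simply note that $d_T(C)=\min_{\bm v\in\Lambda\setminus p\mathbb Z^\ell}\|\bm v\|$, and that this equals $d_{\min}(\Lambda)$ whenever $d_{\min}(\Lambda)<p$. Combining Steps 1 and 2 then gives the equality.
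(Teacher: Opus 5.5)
The paper gives no proof of this lemma; it is imported from Liu--Sakzad, so there is no in-paper argument to compare against. Your two-inequality argument is correct.

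\begin{itemize}
\item Step~1 uses $d_L(x_i,y_i)=\min_{k\in\mathbb Z}|x_i-y_i+pk|$ to write $d_T(\bm x,\bm y)$ as the norm of a nonzero vector of $\Lambda$.
\item Step~2 reduces a shortest vector mod $p$. This works exactly when that vector does not lie in $p\mathbb Z^\ell$.
\end{itemize}

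For $C=\Lambda\cap\mathbb Z_p^\ell$ with $|C|\ge 2$, the two steps together give the exact identities
\[
d_T(C)=\min_{\bm v\in\Lambda\setminus p\mathbb Z^\ell}\|\bm v\|,
\qquad
d_{\min}(\Lambda)=\min\{p,\,d_T(C)\}.
\]
The paper itself uses this $\min\{q,\cdot\}$ form later, in the $\ell=8$ optimality argument, where it writes $d_{\min}(\mathcal P(C'))=\min\{q,d_T(C')\}$.

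The obstacle you hit is a defect in the statement as transcribed, not in your proof: without an extra hypothesis the lemma is false.

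\begin{itemize}
\item Take $p=2$, $\ell=5$ and $\Lambda=2\mathbb Z^5+\mathbb Z(1,1,1,1,1)$. Then $C=\{\bm 0,(1,1,1,1,1)\}$ has $d_T(C)=\sqrt5$, while $d_{\min}(\Lambda)=2$.
\item Take $\Lambda=p\mathbb Z^\ell$. Then $|C|=1$ and $d_T(C)$ is not even defined.
\end{itemize}

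You should also sharpen your degenerate-case discussion. It is not that a pair at toroidal distance exactly $p$ ``need not exist''; it cannot exist. If every minimal vector lies in $p\mathbb Z^\ell$, then every $\bm w\in\Lambda\setminus p\mathbb Z^\ell$ has $\|\bm w\|>p$. Hence $d_T(C)>p=d_{\min}(\Lambda)$ whenever $|C|\ge2$.

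So your added hypothesis is necessary and sufficient, not merely convenient. That hypothesis is that some minimal vector of $\Lambda$ lies outside $p\mathbb Z^\ell$; equivalently, $|C|\ge 2$ and $d_T(C)\le p$.

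\begin{itemize}
\item It holds automatically for $\ell\le 4$, since $|\bm c|_T^2\le \ell\lfloor p/2\rfloor^2\le p^2$.
\item It holds in the paper's application $2mE_8\cap\mathbb Z_{4m}^8$, since $2\sqrt2\,m<4m$. The $\ell=8$ theorem is therefore unaffected.
\end{itemize}

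The detour about lifts with a single nonzero entry plays no role and can be deleted.
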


	The following construction in the case $m=1$ is given in \cite{LiuSakzad2026}, but it is not shown to be MTD. Here we generalize it to any $m > 0$  and  show that they are MTD codes.

\begin{theorem}
	Let \(\ell=8\) and let $q=4m$ with $m\in\mathbb Z_{>0}$. Set
	\[
	\Lambda=2mE_8\subset\mathbb R^8,
	\qquad
	C=\Lambda\cap\mathbb Z_q^8,
	\]
where $E_8$ is defined in \eqref{E8}.	Then $C$ is an MTD code in $\mathbb Z_q^8$ with
	\[
	d_T(C)=2\sqrt{2}\,m=\frac{q}{\sqrt{2}}.
	\]
\end{theorem}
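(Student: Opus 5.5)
We prove the statement in two halves: first we compute $|C|$ and $d_T(C)$ from the lattice structure of $E_8$, and then we show that no $256$-point code in $\mathbb Z_q^8$ does better. For the first half we check the hypotheses of Lemmas~\ref{lmlattie1} and~\ref{lmlattie2} for $\Lambda=2mE_8$ and $p=q=4m$.

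\emph{Integrality.} Every vector of $E_8$ has all coordinates in $\mathbb Z$ or all in $\mathbb Z+\frac12$. Hence $2E_8\subseteq\mathbb Z^8$, and so $\Lambda\subseteq\mathbb Z^8$.

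\emph{Containing $q\mathbb Z^8$.} Any vector of $2\mathbb Z^8$ is integral with even coordinate sum, so $2\mathbb Z^8\subseteq E_8$. Scaling by $2m$ gives $4m\mathbb Z^8\subseteq 2mE_8$, i.e. $q\mathbb Z^8\subseteq\Lambda$.

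\emph{Cardinality.} Since $E_8$ is unimodular, $\det(E_8)=1$ and $\det(\Lambda)=(2m)^8$. Lemma~\ref{lmlattie1} then gives
\[
|C|=\frac{(4m)^8}{(2m)^8}=2^8 .
\]

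\emph{Minimum distance.} The minimum norm of $E_8$ is $\sqrt2$, attained for example by $(1,1,0,\dots,0)$. Hence $d_{\min}(\Lambda)=2m\sqrt2$, and Lemma~\ref{lmlattie2} yields $d_T(C)=2\sqrt2\,m=q/\sqrt2$.

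For the second half we must show that every $C'\subseteq\mathbb Z_q^8$ with $|C'|=2^8$ satisfies $d_T(C')\le q/\sqrt2$. The Plotkin-type bound of Theorem~\ref{Bq} only gives about $\frac{q}{2\sqrt2}\sqrt{8\cdot 256/255}$, which is slightly too large. The Delsarte computation in the earlier example settles only $q=4$, and for general $m$ it would require solving a separate LP for each $q$. I therefore plan a continuous sphere-packing argument.

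\emph{Lifting to a periodic packing.} Lift $C'$ to the periodic point set $P=C'+q\mathbb Z^8\subset\mathbb R^8$, with representatives taken in $\{0,\dots,q-1\}^8$. For lifts $\bm x+q\bm k$ and $\bm y+q\bm k'$ of distinct codewords, the Euclidean distance is at least $\min_{\bm k}\|\bm x-\bm y+q\bm k\|$, and this minimum equals $d_T(\bm x,\bm y)$ because each coordinate separately minimises to $d_L(x_i,y_i)$. Distinct lifts of the same codeword are at distance at least $q$. Write $d=d_T(C')$, and note that $d\le q$ already follows from Theorem~\ref{connections}(i). Then balls of radius $d/2$ centred at the points of $P$ are disjoint, giving a periodic packing with $256$ centres per fundamental domain of volume $q^8$.

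\emph{Applying Viazovska's theorem.} Viazovska's theorem, applied to this periodic packing, states that its density cannot exceed the $E_8$ density $\pi^4/384$. Using $\mathrm{vol}(B_8(r))=\frac{\pi^4}{24}r^8$, we obtain
\[
\frac{256\cdot\frac{\pi^4}{24}(d/2)^8}{q^8}\le\frac{\pi^4}{384},
\]
that is, $(d/2)^8\le q^8/2^{12}$. Hence $d\le 2q\,2^{-3/2}=q/\sqrt2$, and $C$ is an MTD code.

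The main obstacle is this upper bound: it depends on an external deep result, the optimality of $E_8$ among sphere packings (periodic packings suffice), rather than on the bounds proved earlier in the paper. The step that needs the most care is the lifting, namely checking that the toroidal distance equals the minimum Euclidean distance between lifts, so that the packing radius is exactly $d/2$. If one wishes to avoid Viazovska's theorem, the fallback is to run the LP of Theorem~\ref{thm:delsarte-lp-toroidal} for each small $q$ separately, as was done for $q=4$, but that does not give a uniform proof in $m$.
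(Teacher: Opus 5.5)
Your argument follows the paper's proof step for step. You verify $q\mathbb Z^8\subseteq\Lambda\subseteq\mathbb Z^8$, count $|C|=(4m)^8/(2m)^8=2^8$ via Lemma~\ref{lmlattie1}, and get $d_T(C)=d_{\min}(2mE_8)=2\sqrt2\,m$ via Lemma~\ref{lmlattie2}. Optimality then comes from Viazovska's theorem applied to the periodic extension $C'+q\mathbb Z^8$. Your explicit density inequality $256\cdot\frac{\pi^4}{24}(d/2)^8/q^8\le\pi^4/384$ is just the quantitative form of the paper's statement that, at per-point volume $\det(\Lambda)$, no point set beats the minimum distance of $\Lambda$.

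One step is justified incorrectly. The inequality $d\le q$ does not follow from Theorem~\ref{connections}(i). That result only gives $d_T(C')\le\lfloor q/2\rfloor\sqrt{d_H(C')}\le\frac q2\sqrt8=\sqrt2\,q$, and the Plotkin-type bound gives only $d\le q\sqrt{256/255}$. Without $d\le q$, balls of radius $d/2$ around $C'+q\mathbb Z^8$ need not be disjoint, because two lifts of the same codeword are only $q$ apart.

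The repair is the one the paper uses. The minimum distance of $C'+q\mathbb Z^8$ is $\min\{q,d\}$, so take radius $\min\{q,d\}/2$. Your density inequality then gives $\min\{q,d\}\le q/\sqrt2<q$. This forces the minimum to equal $d$, and so $d\le q/\sqrt2$, with no separate a priori bound on $d$ needed.
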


\begin{proof}
	Since $2E_8\supseteq 4\mathbb Z^8$, we have
	\[
	\Lambda=2mE_8\supseteq 4m\mathbb Z^8=q\mathbb Z^8.
	\]
	Moreover, using $\det(E_8)=1$, we obtain
	\[
	\det(\Lambda)=(2m)^8=\frac{q^8}{2^8}.
	\]
	Thus, by Lemma~\ref{lmlattie1},
	\[
	|C|=|\Lambda\cap\mathbb Z_q^8|
	=\frac{q^8}{\det(\Lambda)}
	=2^8.
	\]
	By Lemma~\ref{lmlattie2}, $d_T(C)=d_{\min}(\Lambda)$. Since
	$\Lambda=2mE_8$ and $d_{\min}(E_8)=\sqrt{2}$, it follows that
	\[
	d_T(C)=d_{\min}(\Lambda)
	=2m\sqrt{2}
	=\frac{q}{\sqrt{2}}.
	\]
	
Now we show that $C$ is optimal. Let
	$C'\subseteq\mathbb Z_q^8$ be any code with $|C'|=2^8$, and let
	\[
	\mathcal P(C')=C'+q\mathbb Z^8\subseteq\mathbb R^8
	\]
	be its periodic extension. Each fundamental cell of $q\mathbb Z^8$
	contains exactly $2^8$ points of $\mathcal P(C')$, so the per-point volume is
	\[
	\frac{q^8}{2^8}=\det(\Lambda).
	\]
	By Viazovska's sphere-packing optimality theorem for the \(E_8\) lattice
	\cite{Viazovska2017}, among all point sets in \(\mathbb R^8\) with per-point
	volume \(\det(\Lambda)\), the minimum Euclidean distance is at most that of
	the scaled \(E_8\) lattice \(\Lambda\). Hence
	\[
	d_{\min}(\mathcal P(C'))
	\le d_{\min}(\Lambda)
	=
	\frac{q}{\sqrt2}.
	\]
	On the other hand,
	\[
	d_{\min}(\mathcal P(C'))=\min\{q,d_T(C')\}.
	\]
	Since $q/\sqrt{2}<q$, we obtain
	\[
	d_T(C')\le \frac{q}{\sqrt{2}}=d_T(C).
	\]
	Therefore $C$ is an MTD code in $\mathbb Z_q^8$.
\end{proof}

\subsection{$\ell=16$}

\begin{example}
	Let \(G_0\) be the following \(8\times 16\) matrix over \(\mathbb Z_4\):
	\[
	G_0=
	\left(
	\begin{array}{cccccccc|cccccccc}
		1&0&0&0&0&0&0&0&2&1&1&1&1&1&1&1\\
		0&1&0&0&0&0&0&0&1&2&3&1&2&1&0&0\\
		0&0&1&0&0&0&0&0&1&0&2&3&1&2&1&0\\
		0&0&0&1&0&0&0&0&1&0&0&2&3&1&2&1\\
		0&0&0&0&1&0&0&0&1&1&0&0&2&3&1&2\\
		0&0&0&0&0&1&0&0&1&2&1&0&0&2&3&1\\
		0&0&0&0&0&0&1&0&1&1&2&1&0&0&2&3\\
		0&0&0&0&0&0&0&1&1&3&1&2&1&0&0&2
	\end{array}
	\right).
	\]
	Define a \(16\times 16\) matrix
	\[
	\widetilde G=
	\begin{pmatrix}
		G_0\\
		2G_0
	\end{pmatrix}
	\pmod 4.
	\]
	Consider the code
	\[
	C=\{\bm m\widetilde G\pmod 4:\bm m\in\{0,1\}^{16}\}
	\subseteq \mathbb Z_4^{16}.
	\]
Then
	\[
	d_T(C)=3.
	\]
\end{example}

In the case \(\ell=16\), the largest minimum toroidal distance that we have
currently found is only \(3\). On the other hand, Theorem~\ref{thm:delsarte-lp-toroidal}
gives a theoretical upper bound
\[
d_T(C)\le 2\sqrt{3}.
\]
Whether there exists a code attaining this Delsarte LP bound for \(\ell=16\)
remains an open problem.

\section{Conclusion}

In this paper, we studied the minimum toroidal distance of codes over $\mathbb Z_q^\ell$ and its maximization problem. We established several upper bounds, including a Plotkin-type bound, a local ball--Plotkin bound, and the Delsarte linear programming bound, providing a systematic framework for analyzing $d_T(C)$. 
On the constructive part, we presented explicit code families for $\ell=2,4,8$, and proved their optimality in several cases, showing that they are MTD codes. In particular, the constructions for $\ell=4$ and $\ell=8$ demonstrate that algebraic and geometric methods can effectively yield optimal or near-optimal solutions. 
For $\ell=16$, we constructed a code over $\mathbb Z_4$ whose 
$d_T$ is relatively large with $q=4$. However, there remains a gap between the best construction currently known and
the Delsarte LP upper bound.

\FloatBarrier

\end{document}